\tiny\color{Blue},  % the style that is used for the line-numbers
\DeclareMathOperator*{\argmin}{argmin}
\newtheorem{proposition}{Proposition}[section]
\newtheorem{remark}{Remark}[section]
\begin{document}
%%%%%%%%%%%% TITLE PAGE %%%%%%%%%%%%%%%%%%%

\title{Change-in-Slope Optimal Partitioning Algorithm\\ in a Finite-Size Parameter Space}

\author{ Vincent Runge\\Universit\'e Paris-Saclay, CNRS, Univ Evry,\\ Laboratoire de Math\'ematiques et Mod\'elisation d'Evry,\\ 91037, Evry-Courcouronnes, France\footnote{E-mail: vincent.runge@univ-evry.fr}
  \and
Marco Pascucci\\
Universit\'e Paris-Saclay, CNRS, Univ Evry,\\ Institut des neurosciences Paris-Saclay, 91400, Orsay, France.
\and
Nicolas Deschamps de Boishebert\\Universit\'e Paris-Saclay, CNRS, Univ Evry,\\ Laboratoire de Math\'ematiques et Mod\'elisation d'Evry,\\ 91037, Evry-Courcouronnes, France
}

\date{}
\maketitle

\begin{abstract}
We consider the problem of detecting change-points in univariate time series by fitting a continuous piecewise linear signal using the residual sum of squares. Values of the inferred signal at slope breaks are restricted to a finite set of size~$m$. Using this finite parameter space, we build a dynamic programming algorithm with a controlled time complexity of~$O(m^2n^2)$ for~$n$ data points. Some accelerating strategies can be used to reduce the constant before~$n^2$. The adapted classic inequality-based pruning is outperformed by a simpler ``channel" method on simulations. Besides, our finite parameter space setting allows an easy introduction of constraints on the inferred signal. For example, imposing a minimal angle between consecutive segment slopes provides robustness to model misspecification and outliers. We test our algorithm with an isotonic constraint on an antibiogram image analysis problem for which algorithmic efficiency is a cornerstone in the emerging context of mobile-health and embedded medical devices. For this application, a finite state approach can be a valid compromise.
\end{abstract}

Keywords: Multiple change-point detection, change in slope, pruned dynamic programming, isotonic constraint, unimodal constraint, robust inference

\section{Introduction}
\label{sec:intro}

Detecting change-points in time series is a long-standing problem in statistics that have been tackled in many different ways since the 1950s. Originally developed for quality control in manufacturing \cite{page1954continuous}, it has spread to the most modern sciences such as genomics \cite{hocking2015peakseg,cao2015changepoint}, neuroscience \cite{koepcke2016single,cribben2017estimating} or climate change \cite{beaulieu2012change,yu2019change,reeves2007review,cahill2015change} among many others. Due to the wide range of possible modeling assumptions and problem settings, change-point detection remains today an active scientific field, especially challenging for large data \cite{Frontiers2013}.

\subsection{Search Algorithms}
%\subsection{Change in Mean}
%inferring a piecewise constant signal from the data for change in mean setting. This multiple change-point problem deals with
Last decades, researchers have been principally focused on the multiple change-point problem which consists in finding the location and the number of changes in a time series $y_{1:n} = (y_1,\dots,y_n)$ of length~$n$. Among the $2^{n-1}$ possible change-point vectors, the problem requires to select one vector for which a criterion is minimized or close to the minimum. Algorithmic efficiency is then a central challenge in order to rapidly infer a good change-point vector candidate. 

Binary segmentation (BS) is one of the best known algorithm \cite{Scott,Sen} for solving this problem. It returns an approximate solution in time $O(n\log(n))$. Many other approaches have been proposed: see \cite{truong2020selective} for a review and \cite{fearnhead2020relating} for some comparisons. A well-known alternative method is to use a dynamic programming approach \cite{bellman1962applied} called optimal partitioning (OP) \cite{jackson2005algorithm} which returns the best change-point vector minimizing the criterion\footnote{The problem can have many best change-point vectors with the same minimal value given a criterion. However, this multiple solution is an event of zero probability with real-valued time series.}. Its principle consists in finding the best last segment in consecutive truncated time series $y_{1:t}$ for $t$ from~$1$ to~$n$. Time complexity of this algorithm is of order $O(n^2)$ as for each $y_{1:t}$ we test the $t$ available positions (candidates) for potential best last change-point.

In recent years, pruning ideas were proposed to reduce the set of candidates to consider for each truncated time series $y_{1:t}$, so that the complexity of OP became comparable with BS in some simulation studies. Two main classes of pruning are available: inequality-based pruning implemented in the PELT algorithm \cite{Killick} and functional pruning \cite{rigaill2015pruned,Maidstone} implemented in the FPOP algorithm. PELT is based on a recursion on the position of the last change and remains close to OP (only a conditional expression is added). PELT is efficient only when the number of change-points is proportional to data length. FPOP is based on a recursion on the value of the last segment parameter and is more difficult to code: we need to update a functional cost at any step, that is a continuous piecewise quadratic function (in Gaussian model). However, this strategy achieves the best possible pruning and the log-linear time complexity on many data-sets~\cite{Maidstone}.

\subsection{Change in Slope With Continuity Constraint}

Many of these dynamic programming algorithms were developed to solve the change-in-mean problem for which we fit a piecewise constant signal. Another problem deals with changes in trend, also called changes in slope or changes in regression. It consists in fitting a piecewise linear signal to the time series. Many algorithms were developed for this task \cite{sharma2016trend} using for example Bayesian inference \cite{schutz2011detection} or Hypothesis testing \cite{militino2020performances} but most of them do not consider a continuity constraint between successive segments. This literature on change-point detection is often related to the estimation of climate variations \cite{beaulieu2012change}.

The change-in-slope problem with continuity constraint\footnote{In the following, we often write "change in slope" to mean "change in slope with continuity constraint".} is much less studied, even though many applications need such models \cite{yu2019change,jamali2015detecting,sowa2005direct,todd1999detecting}. A few approaches directly tackle this problem, using ideas of the Wild Binary Segmentation method \cite{fryzlewicz2014wild,baranowski2019narrowest}, trend filtering \cite{kim2009ell_1,tibshirani2014adaptive}, Bayesian approaches \cite{papastamoulis2020bayesian}, or dynamic programming \cite{fearnhead2018detecting}. The latter is based on an FPOP-like algorithm (CPOP) and fits a continuous piecewise linear signal based on the minimization of the residual sum of squares of each segment. It introduces a functional cost parametrized by the signal value associated to the current analysed data point. Using this cost, the authors derive a one-parametric functional update and eventually get a time complexity between $O(n^2)$ and $O(n^3)$ (see simulations in Section \ref{subsec:time}). 
%continuous variable $\phi$ being the inferred value of the last considered data.

\subsection{Contribution}

The guideline of our work was the construction of a simplified efficient algorithm for change-in-slope problem with continuity constraint, capable of returning the optimal change-point vector in a controlled time complexity of order $O(n^2)$. For this purpose, we use a dynamic programming approach with values of the inferred signal at slope breaks restricted to a finite set of values (called states). Furthermore, we were also interested in the possibility to easily constrain the inference as for change-in-mean algorithm GFPOP \cite{dylan2017log}. Our algorithm can restrict the inference to an increasing signal but also force a minimal angle value between consecutive segment slopes. This latter option provides stability for signal inference (filtering outliers) which is a desired features for many applications. The use of states is not only a bypass for reducing running time and enforce constraints: this restricted inference can match the expected level of complexity for some applications as for example in beat detection, for which the musician only needs integer beat-per-minute information \cite{hainsworth2004particle,robertson2012decoding}.

\subsection{Outline}

The outline of this paper is as follows. Section \ref{sec:OP} introduces the model and describes the optimal partitioning algorithm for changes in slope with a simple description of the continuity constraint. We also give the three proposed constrained inference modes: isotonic, unimodal and minimal angle.  Pruning-like approaches for speeding up the algorithm are exposed in Section \ref{sec:pruning}. Section \ref{sec:varest} is dedicated to variance estimation of time series with slopes. In the simulation study in Section \ref{sec:simu} we show the benefit of using the continuity hypothesis by measuring the mean squared error (MSE) and the Adjusted Rand Index (ARI) between the true signal and the inferred one. We also search for the range of penalties minimizing the MSE for 4 different simulation scenarios. We then compare the efficiency of the two proposed pruning methods and the time complexity with the FPOP-like challenging algorithm CPOP \cite{fearnhead2018detecting}. Eventually, we show how using the minimal-angle constraint enhances the stability of the inference. In last Section \ref{sec:app} we apply our algorithm to an antibiogram image analysis problem to find a unique change-point. In this application where non-decreasing signals are expected, the use of a monotonicity constraint improves detection precision. In this example, although the true signal is not expected to have a finite number of states, the finite-state trade-off allows a sensible gain in time and still a good analysis precision. In the context of mobile-health applications and hardware with limited computing capacity, where efficiency is important, such a trade-off could be fundamental.

Our change-in-slope algorithms are available into an R package on CRAN called slopeOP\footnote{\url{https://CRAN.R-project.org/package=slopeOP}} and on github\footnote{\url{https://github.com/vrunge/slopeOP}}.

\section{Change-in-slope Optimal Partitioning}
\label{sec:OP}

\subsection{Model and Cost Function}

We define the set of states $\mathcal{S}$ as a finite set of accessible real values for beginning and ending values in inferred segments. When we write $s= s_{min},\dots,s_{max}$, the variable $s$ goes through all the values of $\mathcal{S}$ from the smallest one to the biggest one. For computational efficiency we recommend to have $\# \mathcal{S} = m << n$ but this is not mandatory. The key idea is that limiting the number of states decreases computational time, while still capturing some useful information (see Section \ref{sec:app}). Notice that with too many states, the challenging algorithm CPOP would be more efficient.

Data are generated by the model \cite{fearnhead2018detecting}:
$$Y_t = s_{i} + \frac{s_{i+1}-s_{i}}{\tau_{i+1}-\tau_i}(t-\tau_i) + \epsilon_t\,,\quad t=\tau_{i}+1,\dots,\tau_{i+1}\,,\quad i = 0,\dots,k\,,$$
with $0 = \tau_0 < \tau_1 < \dots < \tau_k < \tau_{k+1} = n$, $s_0,\dots,s_{k+1} \in \mathcal{S}$ and $\epsilon_t \sim \mathcal{N}(0,\sigma^2)$ identically and independently distributed. Observations are denoted in lowercase $y_1,\ldots,y_n$. The change-point positions are given by integer vector $(\tau_0,\ldots,\tau_{k+1})$ and define a subdivision in $k+1$ consecutive segments $(y_{\tau_i+1},\dots,y_{\tau_{i+1}})$. This modeling imposes to define the cost for fitting data $y_{(\tau+1):t}$, $\tau < t$, with linear interpolation from value $s_1 \in \mathcal{S}$ to value $s_2 \in \mathcal{S}$ as a residual sum of squares:
\begin{equation}
\label{costLinear}
\mathcal{C}(y_{(\tau+1):t},s_1,s_2) = \sum_{i=\tau+1}^{t}\bigg(y_i - \Big(s_1 + (s_2-s_1)\frac{i-\tau}{t-\tau}\Big)\bigg)^2\,.
\end{equation}

Notice that the cost $(y_\tau-s_1)^2$ that would have been obtained at index $\tau$ is not present in the summation, unlike the right bound $(y_t-s_2)^2$. This choice of cost function will lead to an update rule with a simple description of the continuity condition for consecutive segments.

\subsection{Optimization Problem}

The {\it slopeOP} problem consists of finding the optimal partitioning of a time series $y_{1:n} = (y_1,\dots,y_n) \in \mathbb{R}^n$ of size $n$ into $k+1$ consecutive segments $(y_{\tau_i+1},\dots,y_{\tau_{i+1}})$ optimizing a penalized risk
\begin{equation}
\label{optimSlope}
Q_n^{slope} = \min_{\substack{\tau = (\tau_1,\dots,\tau_{k}) \in {\mathbb{N}}^{k} \\ \tau_0 = 0 \,,\, \tau_{k+1} = n\\ (s_0,\dots,s_{k+1}) \in \mathcal{S}}}\sum_{i=0}^k \big\{ \mathcal{C}(y_{(\tau_i+1):\tau_{i+1}}, s_i, s_{i+1}) + \beta\big\} - \beta\,,
\end{equation}
where the states defined inside the cost function provide the continuity constraint between successive segments and $\beta$ is a positive penalty. This penalty controls the amount of evidence we need to add a change: the greater this quantity, the less is $k$. The "minus beta" term is added in order to get the null value in case of a perfect fit (null residuals) with no change. We emphasise that $k$ is an unknown quantity. The penalty value is often set to the BIC penalty, $2\hat\sigma^2 \log(n)$  \cite{tickle2020parallelization,zheng2019consistency}, as we have a result of asymptotic consistency for change-in-mean problems  \cite{yao1988estimating}. The quantity $\hat\sigma^2$ is an estimation of the true variance~$\sigma^2$. Some other penalty values can be also relevant \cite{davis2006structural,zhang2007modified}. For the change-in-slope problem, theorems for a similar penalty have been given in asymptotic regime \cite{zheng2019consistency}. As variance estimation is needed, we proposed a robust estimator of the variance based on an adaptation to the difference estimator of Hall et al. \cite{hall1990asymptotically} (see {Table~\ref{tab:variance}}). In non-asymptotic regime, setting the penalty is a difficult task as it may also depends on signal shape (see Figure~\ref{scenarioResults}), that is the form of the deterministic part in the time series ($t \mapsto Y_t - \epsilon_t$).

\subsection{Dynamic Programming Algorithm}

To address the continuity constraint in an efficient algorithm we introduce the function 
$$Q_t : \left\{
\begin{array}{lll}
\mathcal{S} &\to &\mathbb{R} \\
u &\mapsto &Q_t(u)\,,
\end{array}
\right.$$
which is the optimal penalized cost up to position $t$ with the inferred value at this position equal to $u$. Our goal is then to update the set
$$\mathcal{Q}_t = \{Q_t(u), u= s_{min},\dots,s_{max}\}\,,$$
at any time step $t \in \{1,\dots,n\}$, where $s_{min}$ and $s_{max}$ are bounds that can be determined in a pre-processing step. With $y_{-}$ and $y_{+}$ the minimal and maximal values of the time-series, setting for example the bounds around $y_{\pm} \pm \frac{y_{+}-y_{-}}{2}$ include all extreme intersection points between consecutive segments (segment of two points $(y_{-},y_{+})$ followed by $(y_{+},y_{-})$ gives the upper bound). We find the objective value $\displaystyle Q_n^{slope}$ with the minimization $\displaystyle Q_n^{slope} = \min_{s_{min} \le u \le s_{max}}\big\{Q_{n}(u)\big\}$.

\begin{proposition}
\label{prop:updateRule}
The update rule with continuity constraint takes the form
\begin{equation}
\label{updateSlopeOP}
Q_t(v) = \min_{0 \le t' < t}\bigg( \min_{s_{min} \le u \le s_{max}}\big\{Q_{t'}(u) + \mathcal{C}(y_{(t'+1):t},u,v) + \beta\big\}\bigg)\,,
\end{equation}
where the state $u$ in $Q_{t'}$ and the cost function including states realizes the continuity constraint. At the initial step we have $Q_0(v) = -\beta$ for all $v \in \mathcal{S}$. 
\end{proposition}
The proof is detailed in Appendix~\ref{app:update} and the corresponding optimal partitioning algorithm based on (\ref{updateSlopeOP}) is given in Algorithm \ref{slopeOP}.

\begin{algorithm}[H]
\label{slopeOP}
\SetKwInOut{Input}{Input}
\Input{data $y_{1:n}$, set of states $\mathcal{S} = \{s_{min},\dots,s_{max}\}$ and penalty $\beta > 0$}
$Q$ = matrix of size $(n+1) \times m$\\
$cp$ = matrix of size $n \times m$\\
$U$ = matrix of size $n \times m$\\
\For{$v =  s_{min}$ to $s_{max}$}
{$Q(0,v) \leftarrow - \beta$}
\For{$t = 1$ to $n$}
{
\For{$v =  s_{min}$ to $s_{max}$}{$\displaystyle Q(t,v) \leftarrow \min_{0 \le t' < t}\left( \min_{s_{min} \le u \le s_{max}}\big\{Q_{t'}(u) + \mathcal{C}(y_{(t'+1):t},u,v) + \beta\big\}\right)$\\
$\displaystyle cp(t,v) \leftarrow \argmin_{0 \le t' < t}\left( \min_{s_{min} \le u \le s_{max}}\big\{Q_{t'}(u) + \mathcal{C}(y_{(t'+1):t},u,v) + \beta\big\}\right)$\\
$\displaystyle U(t,v) \leftarrow \argmin_{s_{min} \le u \le s_{max}} \left(\min_{0 \le t' < t} \big\{Q_{t'}(u) + \mathcal{C}(y_{(t'+1):t},u,v) + \beta\big\}\right)$}
}
return $cp$, $U$ and $Q(n,\cdot)$
\caption{Change-in-slope Optimal Partitioning (slopeOP)}
\end{algorithm}
~\\

The slopeOP algorithm is then an OP algorithm searching for the best last change-point in $y_{1:t}$ for $t$ from $1$ to $n$. But, differently from OP, it looks for it in a finite set of couples (state,time) of size $m \times t$, where the state is in the finite states set. Summing up for all states and for $t$ from $1$ to $n$, we get an overall running time of order $O(m^2 n^2)$. In next Section, we propose two accelerating strategies to reduce this initial $m^2 n^2$ time complexity. 

The backtracking step is a little different from the standard optimal partitioning algorithm as we need to take into account the states.\\

\begin{algorithm}[H]
\label{backtrackFSC}
\SetKwInOut{Input}{Input}
\Input{Output of Algorithm \ref{slopeOP} : $cp$, $U$ and $\{Q(n,v)\,,\, v \in \mathcal{S}\}$}
$chpts \leftarrow ()$\\
$\displaystyle states \leftarrow (\argmin_{s_{min} \le v \le s_{max}}\{Q(n,v)\})$\\
$t \leftarrow n$\\

\While{$t > 0$}{
$chpts \leftarrow  (t,chpts)$\\
$states \leftarrow (U(t,states(0)),states)$\\

$t \leftarrow cp(t,states(1))$\\

}
return $chpts$ and $states$
\caption{Backtracking Algorithm}
\end{algorithm}

\subsection{Inference With Signal Shape Constraints}
\label{subsec:const}

Besides its unambiguous computational time efficiency in $O(m^2n^2)$, one of the benefit of this finite-state approach is the possibility to simply enforce some constraints in the inference. 
The quantity to optimize is then
\begin{equation}
\label{optimSlopeConstr}
Q_n^{ctt} = \min_{\substack{\text{vectors } \tau \text{ and } s\\ (\tau,s) \in U_{k+1}}}\sum_{i=0}^k \big\{ \mathcal{C}(y_{(\tau_i+1):\tau_{i+1}}, s_i, s_{i+1}) + \beta\big\} - \beta\,,
\end{equation}
where $U_{k+1}$ is the set of constraints containing the positions (couples) that can be used in the minimization. We use notation $\nu_i$ for a couple $(\tau_i,s_i)$ position-state, $(\tau,s) = ((\tau',n),(s',v))$ is a sequence of couples with the last one being $(n,v)$ and $U_{k+1}^{(n,v)}$ is the set $U_{k+1}$ of sequences of couples with a last couple equal to $(n,v)$. A dynamic programming algorithm solving exactly (\ref{optimSlopeConstr}) can be built if in $U_{k+1}^{(n,v)}$ the information over the path $(\nu_1,\ldots,\nu_k)$ can be conveyed alongside this path to the current $\nu_k$ position. That is:
$$U_{k+1}^{(n,v)} = \{(\tau,s)\,|\, (\tau',s') \in U_k^{\nu_k}\,,\, f(\nu_1,\ldots,\nu_k,(n,v)) = 1  \}$$
$$ = \{(\tau,s)\,|\, (\tau',s') \in U_k^{\nu_k}\,,\, g(M(\nu_k),\nu_k,(n,v)) = 1  \}\,,$$
with $g:\mathbb{R}\times (\{0,\dots,n\} \times \mathcal{S})^2 \to \{0,1\}$ a validity test associated to the constraint. Function $f$ is reduced to $g$. The ``memory'' function $M:\{0,\dots,n\} \times \mathcal{S} \to \mathbb{R} $ summarizes the information of the path $(\nu_1,\ldots,\nu_k)$ and is associated in practice to the cost $Q_{t'}(u)$ of the update rule. If $g$ equals $0$, the couple $\nu_k = (t',u)$ cannot be considered in the minimization. 

With this definition for $U_{k+1}$ we get the update rule:
\begin{equation}
\label{updateSlopeOP_constraint}
Q_t^{ctt}(v) = \min_{\substack{ \tilde{\nu} = (t',u) \in  \{0,\dots,t-1\} \times \mathcal{S}\\ \nu = (t,v) \,,\, g(M(\tilde{\nu}),\tilde{\nu},\nu) = 1}}\bigg( Q_{t'}(u) + \mathcal{C}(y_{(t'+1):t},u,v) + \beta\bigg)\,.
\end{equation}
In this rule, the positions $\tilde{\nu}$ of the minimization operator are taken into account according to the $g$ function value depending only on the current couple $\nu = (t,v)$ and past events memorized by function $M(\cdot)$. We illustrate this constraint approach by the examples of isotonic, unimodal and stable inferences.

{\bf Isotonic case.} The update rule for isotonic constraint is:
\begin{equation}
\label{updateSlopeOP_isotonic}
Q_t^{iso}(v) = \min_{0 \le t' < t}\bigg( \min_{s_{min} \le u \le v}\big\{Q_{t'}(u) + \mathcal{C}(y_{(t'+1):t},u,v) + \beta\big\}\bigg)\,,
\end{equation}
which is close to (\ref{updateSlopeOP}) but with a constrained minimization for variable $u$. Here $M(\tilde{\nu})=0$ for all $\tilde{\nu}$ and $g(0,\tilde{\nu},\nu) = 1$ if and only if $u \le v$ for  $\tilde{\nu} = (t',u)$ and $\nu=(t,v)$. We illustrate the use of the isotonic constraint in Section \ref{sec:app} with an antibiogram image analysis problem.\\

{\bf Unimodal case.} $M(\tilde{\nu})$ is equal to $1$ if at position $\tilde{\nu} = (t',u)$ no decreasing segment has been yet inferred. Otherwise $M(\tilde{\nu})=0$. $g(1,\cdot,\cdot) = 1$ and 
$$g(0,\tilde{\nu} = (t',u),\nu = (t,v)) = \left\{
\begin{array}{ll}
0 & \hbox{if} \,\,u < v\,,  \\
1 & \hbox{if} \,\, u \ge v \,.\\
\end{array} \right.$$
Value $M(\nu)$ is determined just after setting $Q(\nu) = Q_{t}(v)$ using the slope sign information on the segment $(\tilde{\nu},\nu)$.\\

{\bf Minimal angle case.} $M(\tilde{\nu}) \in \mathbb{R}$ is the slope value of the last inferred segment at position $\tilde{\nu}$. Here the value $g(M(\tilde{\nu}),\tilde{\nu},\nu)$ computes the angle deviation in degree between slope $M(\tilde{\nu})$ ans slope formed by $(\tilde{\nu},\nu)$ If the obtained value is less that a threshold we have $g(M(\tilde{\nu}),\tilde{\nu},\nu) = 0$ and $g(M(\tilde{\nu}),\tilde{\nu},\nu) = 1$ otherwise. With this constraint, we tend to be robust to outliers with the right level of smoothness (the threshold). We also hope to get some kind of robustness to model misspecification (see simulations in Section \ref{subsec:minangle}).\\

In the R package slopeOP, we also provide a \{Segment Neighborhood dynamic programming method returning the best change-point vector with a given fixed number of segments. With $k$ changes, its time complexity is about $k$ times greater than the complexity of slopeOP.

\section{Accelerating Strategies}
\label{sec:pruning}

As revealed by Algorithm \ref{slopeOP} the double loop for variables $t$ and $s$ (lines 7~and~8) is time consuming. We first give a formula for a constant-time computation of cost (\ref{costLinear}). Then we propose two accelerating strategies aiming at reducing the set of values to consider for the search for minimum in matrix $Q$. The first one is a PELT-like method based on inequalities of type ``$Q_{t'}(u) + \mathcal{C}(y_{{(t'+1)}:t},u,v) > Q_t(v)$". The second one considers direct comparisons between elements of the minimization in (\ref{updateSlopeOP}). We test the efficiency of those two accelerating rules on simulations in Section \ref{subsec:time}. As the constraints can force the choice of non-minimal elements in (\ref{updateSlopeOP_constraint}) these methods can only be implemented in non-constraint setting (at the exception of the isotonic constraint).

\subsection{Efficient Cost Computation}

\begin{proposition}
\label{efficientCost}
The cost function (\ref{costLinear}) can be computed in constant time with the formula 
$$\mathcal{C}(y_{(t'+1):t},u,v) = S_{t}^2 - S_{t'}^2 - \frac{2}{t-t'} \Big( \big(ut-vt'\big)\big(S_{t}^1 - S_{t'}^1\big) + (v-u)\big(S_{t}^+ - S_{t'}^+\big)\Big)$$
$$\quad\quad\quad\quad\quad\quad + \frac{v^2-u^2}{2} + \frac{u^2 + uv +v^2}{3}(t-t') + \frac{(v-u)^2}{6(t-t')}\,,$$
where $t' < t \in \{0,\dots,n\}$, $(u,v) \in \mathcal{S}^2$ and
$$S^1_t = \sum_{i=1}^t y_i\quad , \quad S^2_t = \sum_{i=1}^t y_i^2\quad \hbox{and} \quad S^+_t = \sum_{i=1}^t iy_i\quad \hbox{for all} \,\, t \in \{1,\dots,n\}\,.$$
\end{proposition}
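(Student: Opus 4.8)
The plan is to expand the squared residuals in the cost (\ref{costLinear}) and to recognise each resulting sum as either a difference of the cumulative arrays $S^1, S^2, S^+$ or a standard power sum with a known closed form. Writing $\ell_i = u + (v-u)\frac{i-t'}{t-t'}$ for the linear interpolant between the states $u$ and $v$, the expansion $(y_i-\ell_i)^2 = y_i^2 - 2 y_i \ell_i + \ell_i^2$ splits the cost into the three sums $\sum y_i^2$, $-2\sum y_i \ell_i$ and $\sum \ell_i^2$, each running over $i = t'+1,\dots,t$.

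The first two sums are handled directly. By telescoping, $\sum_{i=t'+1}^t y_i^2 = S_t^2 - S_{t'}^2$, the leading difference in the claimed formula. For the cross term I would first clear the denominator in $\ell_i$ to write it as the affine function $\ell_i = \frac{(ut-vt') + (v-u)i}{t-t'}$. Then
\[
-2\sum_{i=t'+1}^t y_i \ell_i = -\frac{2}{t-t'}\Big[(ut-vt')\sum_{i=t'+1}^t y_i + (v-u)\sum_{i=t'+1}^t i\, y_i\Big],
\]
and replacing the two partial sums by $S_t^1 - S_{t'}^1$ and $S_t^+ - S_{t'}^+$ reproduces exactly the bracketed middle term of the formula.

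The only genuinely computational part is the data-free quadratic sum $\sum_{i=t'+1}^t \ell_i^2$, which must collapse to the last three summands. Here I would substitute $j = i - t'$, so that $j$ runs from $1$ to $N := t-t'$ and $\ell = u + (v-u)\frac{j}{N}$. Expanding the square and inserting the closed forms $\sum_{j=1}^N j = \frac{N(N+1)}{2}$ and $\sum_{j=1}^N j^2 = \frac{N(N+1)(2N+1)}{6}$ gives
\[
\sum_{i=t'+1}^t \ell_i^2 = N u^2 + u(v-u)(N+1) + \frac{(v-u)^2(N+1)(2N+1)}{6N}.
\]

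The main obstacle is no more than careful bookkeeping: I would regroup this expression by powers of $N$. Setting $w := v-u$, the terms proportional to $N$ collect to $N\big(u^2 + uw + \tfrac{w^2}{3}\big) = \frac{u^2+uv+v^2}{3}(t-t')$, the $N$-independent terms collect to $uw + \tfrac{w^2}{2} = \frac{v^2-u^2}{2}$, and the remaining term is $\frac{w^2}{6N} = \frac{(v-u)^2}{6(t-t')}$. Summing the three pieces yields the stated identity. Finally, the constant-time claim follows because, once $S^1, S^2, S^+$ are precomputed in a linear-time preprocessing pass, every evaluation of $\mathcal{C}(y_{(t'+1):t},u,v)$ reads only two entries of each array and performs a fixed number of arithmetic operations.
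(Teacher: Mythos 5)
Your proposal is correct and follows exactly the route the paper intends: the authors state only that the identity follows ``by direct (tedious) calculations,'' and your expansion of $(y_i-\ell_i)^2$ into the data-dependent sums (telescoping to $S^2$, $S^1$, $S^+$ differences) plus the data-free power sums in $N=t-t'$ is precisely that calculation, carried out correctly. The regrouping of $\frac{(v-u)^2(N+1)(2N+1)}{6N}$ into the $N$, constant, and $1/N$ pieces checks out, as does the constant-time conclusion from the $O(n)$ preprocessing of the cumulative arrays.
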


The proof is straightforward by direct (tedious) calculations. The vectors $(S_t^1)_{t=1,\dots,n}$, $(S_t^2)_{t=1,\dots,n}$ and $(S_t^+)_{t=1,\dots,n}$ are computed in a pre-processing step in $O(n)$ time and saved. This is a marginal cost compared to the cost for Algorithm \ref{slopeOP}. Notice that the  $O(m^2n^2)$ time complexity is obtained using the result of Proposition \ref{efficientCost}.

\subsection{Inequality-based Pruning}

With the following Proposition \ref{P3} we build a standard inequality-based pruning rule that has the specificity to take into account future data. Our cost function~(\ref{costLinear}) doesn't use a minimisation as in PELT: we cannot guarantee that splitting a segment into two segments would result in a smaller overall cost. This property is at the core of the PELT pruning rule. 

At each position $v \in \mathcal{S}$, we need to update a set of positions:
$$\mathcal{U}(t,v) \subset \{(t',u) \,,\, t' \in \{0,\dots,t-1\}\,,\, u \in \{s_{min},\dots,s_{max}\}\}\,,$$
to get $\mathcal{U}(t+1,v)$. There are $m$ such sets. Therefore, we actually chose to transfer information at constant state value.

\begin{proposition}
\label{P3}
At time $t$ and position $v$, suppose that there exists $t'<t$ and $u\ne v$ such that 
\begin{equation}
\label{Rule3Pruning}
Q_{t'}(u) + \mathcal{C}(y_{{(t'+1)}:t},u,v) > Q_t(v)\,.
\end{equation}
We define  $S_t^{t'} = \sum_{i=t'+1}^{t}(i-t')y_i$ and coefficients $(\alpha_t^-,\alpha_t^+,\gamma^-_t,\gamma^+_t)$ such that
$$\alpha_t^+ T + \gamma^+_t \le \sum_{i=t+1}^{T-1}y_i \frac{T-i}{T-t} \le \alpha_t^- T + \gamma^-_t\,, \quad T= t+1,\dots,n\,.$$
We also define for all integers $T > t$ the affine in $T$ functions: 
$$ \left\{
\begin{array}{l}
\displaystyle f^+(T) = \left(\alpha^+_t - \frac{u + 2v}{6}\right)T + t'\frac{u + 2v}{6}  - \frac{v-u}{12(t-t')} +  \frac{S_t^{t'}}{t-t'} + \gamma^+_t\,, \\
\displaystyle f^-(T) = \left(\alpha^-_t - \frac{u + 2v}{6}\right)T + t'\frac{u + 2v}{6} - \frac{v-u}{12(t-t')} +  \frac{S_t^{t'}}{t-t'} + \gamma^-_t\,.
\end{array}
\right.
$$

If $f^+(t+1) \ge 0$ and $f^+(n) \ge 0$ (case $v - u > 0$) or if $f^-(t+1) \le 0$ and $f^-(n) \le 0$ (case $v - u < 0$) then the position $(t',u)$ doesn't have to be considered for further iterations $T>t$ for computing $Q_{T}(v)$ in Algorithm~\ref{slopeOP}. We remove $(t',u)$ from $\mathcal{U}(T,v)$ for all $T>t$.
\end{proposition}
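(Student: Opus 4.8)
The plan is to recognise this as a domination rule: I will show that once (\ref{Rule3Pruning}) holds, the candidate $(t',u)$ can never strictly beat the \emph{flat} extension of $Q_t(v)$ when one computes $Q_T(v)$ for any $T>t$. Indeed, taking $t'=t$ and $u=v$ in the update rule (\ref{updateSlopeOP}) gives the free upper bound $Q_T(v) \le Q_t(v) + \mathcal{C}(y_{(t+1):T},v,v) + \beta$. Hence it suffices to prove that for every $T\in\{t+1,\dots,n\}$,
\[
Q_{t'}(u) + \mathcal{C}(y_{(t'+1):T},u,v) \;\ge\; Q_t(v) + \mathcal{C}(y_{(t+1):T},v,v)\,,
\]
since the two penalties $\beta$ attached to the last added segment cancel. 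Writing $K = Q_{t'}(u)-Q_t(v)$, hypothesis (\ref{Rule3Pruning}) reads $K > -\mathcal{C}(y_{(t'+1):t},u,v)$, so the displayed inequality follows (strictly) from the purely data-dependent bound
\[
H(T) := \mathcal{C}(y_{(t'+1):T},u,v) - \mathcal{C}(y_{(t'+1):t},u,v) - \mathcal{C}(y_{(t+1):T},v,v) \ge 0\,.
\]
Thus the whole proposition reduces to controlling $H(T)$, and the values of the $Q$'s disappear --- which is exactly why $f^\pm$ contain no cost terms.

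Second, I would expand the three costs in $H(T)$ with Proposition \ref{efficientCost}. The two terms $S_T^2$ cancel immediately, and the only future-dependent quantities $S_T^1$ and $S_T^+$ enter through the single weighted sum $W(T):=\sum_{i=t+1}^{T-1}y_i\frac{T-i}{T-t}$: after collecting coefficients I expect the future part of $H(T)$ to collapse to $\frac{2(v-u)(T-t)}{T-t'}\,W(T)$, while every remaining piece is known at time $t$. A global factor $(v-u)$ then factors out of $H(T)$, so the sign of $H(T)$ is governed by the sign of $v-u$ times a bracket $B(T)$.

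The decisive manipulation is to multiply $B(T)$ by the strictly positive factor $\frac{T-t'}{2(T-t)}$ (for $T>t$), which normalises the coefficient of $W(T)$ to $1$. The point I would verify by the tedious but routine algebra is that all residual $T$-dependence in the ``known'' terms then cancels: the terms carrying $\frac{1}{T-t'}$ combine, using $\frac{1}{t-t'}-\frac{1}{T-t'}=\frac{T-t}{(t-t')(T-t')}$ and the identity $\sum_{i=t'+1}^t(i-t')y_i = S_t^{t'}$, into the constants $\frac{S_t^{t'}}{t-t'}$ and $-\frac{v-u}{12(t-t')}$, while the explicit linear part reduces to $-\frac{u+2v}{6}(T-t')$. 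This leaves the affine expression $W(T) - \frac{u+2v}{6}(T-t') + \frac{S_t^{t'}}{t-t'} - \frac{v-u}{12(t-t')}$, into which substituting the prescribed affine bounds on $W(T)$ produces precisely $f^+(T)$ (using the lower bound, for $v>u$) or $f^-(T)$ (using the upper bound, for $v<u$).

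Finally I would close the argument by the sign/endpoint analysis. Because $(v-u)$ and $\frac{2(T-t)}{T-t'}$ have fixed signs on $T>t$, $H(T)\ge 0$ is equivalent to the normalised bracket being $\ge 0$ when $v>u$ and $\le 0$ when $v<u$; the bound on $W(T)$ then yields $H(T)\ge 0$ as soon as $f^+(T)\ge 0$ (resp.\ $f^-(T)\le 0$). Since $f^\pm$ are affine in $T$, sign-definiteness on the whole range $\{t+1,\dots,n\}$ is guaranteed by checking only the two endpoints $T=t+1$ and $T=n$, which are exactly the stated conditions; this then licenses removing $(t',u)$ from $\mathcal{U}(T,v)$ for all $T>t$. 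The main obstacle is purely computational --- carrying out the expansion of the costs without error and confirming that the awkward $\frac{1}{T-t'}$ contributions really do telescope into the $t$-indexed constants of $f^\pm$; conceptually, the only non-obvious move is the positive rescaling by $\frac{T-t'}{2(T-t)}$ that turns a non-affine inequality into the affine test on $f^\pm$.
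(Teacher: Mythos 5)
Your proposal is correct and follows essentially the same route as the paper: reduce the claim to the cost superadditivity inequality $\mathcal{C}(y_{(t'+1):T},u,v) \ge \mathcal{C}(y_{(t'+1):t},u,v) + \mathcal{C}(y_{(t+1):T},v,v)$, expand the costs so the future data enter only through $\sum_{i=t+1}^{T-1}y_i\frac{T-i}{T-t}$, rescale by the positive factor $\tfrac{T-t'}{T-t}$ to make the remaining terms affine in $T$, and conclude via the affine bounds and the two endpoint checks. Your bookkeeping (including the factor $\tfrac{2(v-u)(T-t)}{T-t'}$ in front of $W(T)$ and the final form of $f^\pm$) matches the paper's displayed intermediate inequality exactly, and you make explicit the domination step that the paper leaves implicit.
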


The proof is given in Appendix~\ref{app:pelt}.

The use of additional parameters (the states) in cost function leads to a more complex pruning rule than the one for standard PELT \cite{Killick} as the overall cost is not guarantee to decrease as the number of change-point increases. However, in case $u=v$ we have a simpler result:

\begin{proposition}
\label{P1}
At time $t$, if there exists $t' \in \{0,\dots,t-1\}$ such that 
\begin{equation}
\label{Rule1Pruning}
Q_{t'}(v) + \mathcal{C}(y_{{(t'+1)}:t},v,v) > Q_t(v)\,,
\end{equation}
then the position $t'$ doesn't have to be considered for further iterations $T>t$ for computing $Q_{T}(v)$ in Algorithm \ref{slopeOP}.
\end{proposition}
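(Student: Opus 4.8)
The plan is to exploit the crucial simplification that occurs when $u = v$: the interpolation from $v$ to $v$ is flat, so the cost (\ref{costLinear}) collapses to the residual sum of squares of a constant fit, $\mathcal{C}(y_{(t'+1):t},v,v) = \sum_{i=t'+1}^{t}(y_i - v)^2$. First I would record the single structural fact that drives everything, namely that such a constant-segment cost is additive over consecutive index ranges:
$$\mathcal{C}(y_{(t'+1):T},v,v) = \mathcal{C}(y_{(t'+1):t},v,v) + \mathcal{C}(y_{(t+1):T},v,v)\,,\qquad t' < t < T\,.$$
This is immediate because the ranges $\{t'+1,\dots,t\}$ and $\{t+1,\dots,T\}$ partition $\{t'+1,\dots,T\}$, and it holds precisely because the common endpoint value $v$ makes the fitted line constant on the whole block.

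With this in hand I would run the standard PELT-style domination argument. Fix $T > t$ and compare, inside the minimization (\ref{updateSlopeOP}) defining $Q_T(v)$, the contribution of the candidate last change-point $(t',v)$ against that of $(t,v)$. Adding the tail cost $\mathcal{C}(y_{(t+1):T},v,v)$ to both sides of the hypothesis (\ref{Rule1Pruning}) and invoking additivity yields
$$Q_{t'}(v) + \mathcal{C}(y_{(t'+1):T},v,v) > Q_t(v) + \mathcal{C}(y_{(t+1):T},v,v)\,.$$
Adding $\beta$ to both sides, the right-hand side is exactly the value the candidate $(t,v)$ offers to $Q_T(v)$, hence an upper bound for the minimum, so $Q_T(v) \le Q_t(v) + \mathcal{C}(y_{(t+1):T},v,v) + \beta < Q_{t'}(v) + \mathcal{C}(y_{(t'+1):T},v,v) + \beta$. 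Therefore the candidate $(t',v)$ is strictly worse than $Q_T(v)$ and can never attain the minimum, so it may be dropped from the search for $Q_T(v)$ at every $T > t$.

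I expect essentially no obstacle here, and this is exactly the point: the $u=v$ case restores the additive structure on which ordinary PELT depends. The contrast with Proposition \ref{P3} is sharp, since when $u \ne v$ the interpolating line from $u$ to $v$ over $\{t'+1,\dots,T\}$ does not decompose into the two pieces over $\{t'+1,\dots,t\}$ and $\{t+1,\dots,T\}$ (the slope differs on each piece), additivity fails, and one must instead force the weaker inequality $\mathcal{C}(y_{(t'+1):T},u,v) \ge \mathcal{C}(y_{(t'+1):t},u,v) + \mathcal{C}(y_{(t+1):T},v,v)$ by means of affine upper and lower bounds. None of that machinery is needed here; the only step deserving a line of care is confirming that $(t,v)$ is a legitimate competitor in (\ref{updateSlopeOP}) for $Q_T(v)$, which it is, as it corresponds to a flat final segment ending at state $v$ with change-point $t < T$.
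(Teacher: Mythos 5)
Your proof is correct and follows essentially the same route as the paper's: the additivity of the constant-segment cost $\mathcal{C}(y_{(t'+1):T},v,v) = \mathcal{C}(y_{(t'+1):t},v,v) + \mathcal{C}(y_{(t+1):T},v,v)$, followed by adding the tail cost to both sides of (\ref{Rule1Pruning}) to show that $(t',v)$ is dominated by $(t,v)$ in the minimization defining $Q_T(v)$. Your added remarks on why this additivity fails when $u \ne v$ are accurate but not needed for the argument.
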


\begin{proof}
We use the fact that we have
$$ \mathcal{C}(y_{{(t'+1)}:t},v,v) +  \mathcal{C}(y_{(t+1):T},v,v) =  \mathcal{C}(y_{{(t'+1)}:T},v,v)\,,$$
for any $t' < t < T$. Adding the quantity $\mathcal{C}(y_{(t+1):T},v,v)$ to (\ref{Rule1Pruning}) and using this equality we get
$$Q_{t'}(v) + \mathcal{C}(y_{{(t'+1)}:T},v,v) + \beta > Q_t(v) + \mathcal{C}(y_{(t+1):T},v,v) + \beta\,,$$
which means that $(t',v)$ is a choice less optimal than $(t,v)$ for the minimization in $Q_{T}(v)$.
\end{proof}

We have a second pruning rule for $v$-constant positions which is original in the sense that a recent position can be pruned by an older one.

\begin{proposition}
\label{P2}
At time $t$, if there exists $t' \in \{0,\dots,t-1\}$ such that 
\begin{equation}
\label{Rule2Pruning}
Q_{t'}(v) + \mathcal{C}(y_{{(t'+1)}:t},v,v) \le Q_t(v)\,,
\end{equation}
then the position $t$ doesn't have to be considered for further iterations $T>t$ for computing $Q_{T}(v)$ in Algorithm \ref{slopeOP}.
\end{proposition}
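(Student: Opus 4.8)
The plan is to mirror the argument already used for Proposition \ref{P1}, since the only relation at stake is between two flat (constant-state) candidates that share the same state value $v$. The single structural ingredient I need is the additivity of the flat-segment cost,
$$\mathcal{C}(y_{(t'+1):t},v,v) + \mathcal{C}(y_{(t+1):T},v,v) = \mathcal{C}(y_{(t'+1):T},v,v)\,,$$
valid for every $t' < t < T$. This is precisely the identity invoked in the proof of Proposition \ref{P1}: when $s_1 = s_2 = v$ the linear interpolation in (\ref{costLinear}) reduces to the constant $v$ at each index, so the fitted value is independent of where the segment boundary is placed, and the residual sum of squares splits additively across the cut at~$t$.

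First I would fix an arbitrary future horizon $T > t$ and add the common quantity $\mathcal{C}(y_{(t+1):T},v,v) + \beta$ to both sides of the hypothesis inequality (\ref{Rule2Pruning}). After applying the additivity identity to the left-hand side this yields
$$Q_{t'}(v) + \mathcal{C}(y_{(t'+1):T},v,v) + \beta \le Q_t(v) + \mathcal{C}(y_{(t+1):T},v,v) + \beta\,.$$
The right-hand side is exactly the value that the candidate $(t,v)$ contributes to the minimization defining $Q_T(v)$ in (\ref{updateSlopeOP}) when the last segment is taken flat at height $v$; the left-hand side is the value contributed by the older candidate $(t',v)$ under the same flat-segment transition. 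Hence $(t',v)$ is always at least as good a choice as $(t,v)$ for computing $Q_T(v)$, and this holds for every $T > t$ because $T$ was arbitrary.

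Finally I would conclude that the position $t$ (at state $v$) can never be the unique minimizer in any subsequent update, so it may be removed from the candidate set $\mathcal{U}(T,v)$ for all $T > t$ without affecting the computed optimum. I expect no genuine obstacle here, as the argument is the exact mirror image of Proposition \ref{P1} with the direction of the inequality reversed, so that the recent position rather than the older one is discarded. The only point requiring a little care is the tie case: when equality holds in (\ref{Rule2Pruning}) the two candidates coincide in value, and one must check that discarding $t$ while keeping $t'$ still returns a correct optimum, which it does since an equal-valued alternative remains available at every future step.
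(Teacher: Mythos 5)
Your argument is correct and is exactly what the paper intends: it states that the proof of Proposition \ref{P2} is the same as that of Proposition \ref{P1}, namely adding $\mathcal{C}(y_{(t+1):T},v,v)+\beta$ to both sides of the hypothesis and using the additivity of the flat-segment cost, with the roles of $t$ and $t'$ swapped by the reversed inequality. Your extra remark on the tie case is a sensible (if unstated in the paper) clarification, but the approach is the same.
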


The proof is the same as for Proposition \ref{P1}.

\begin{remark}
One of the two inequalities (\ref{Rule1Pruning}) and (\ref{Rule2Pruning}) is always satisfied so that the subset $\mathcal{U}(t,v) \cap \{(t',v)\,,\,t'=1,\dots,t-1\}$ contains at most one element.
\end{remark}

\subsection{Channel Pruning}

We propose a new rule for speeding-up the algorithm, called the ``channel method". It does not prune the position $(t',u)$ in the classical sense, as a non-considered value can be further reintegrated in the minimization (in a set $\mathcal{U}(t,v)$). However its simplicity could help to reduce time complexity.

We expect the cost $Q$ to verify some structural properties. In particular, 
$$q_t = \left\{
\begin{array}{lll}
\mathcal{S} &\to &\mathbb{R} \\
v &\mapsto &Q_t(v)
\end{array}
\right.$$ should have a minimum for $v$ ``near" the data, which means that $q_t$ is, for most $t$, decreasing until a state $v_l$ and increasing from a state $v_u \ge v_l$. The function $$C_{(t'+1):t}^{\tilde{v}} = \left\{
\begin{array}{lll}
\mathcal{S} &\to &\mathbb{R} \\
v &\mapsto & \mathcal{C}(y_{(t'+1):t},v,\tilde{v})
\end{array}
\right.$$
has a global minimum that can be easily found (it's a quadratic function). The idea is to study the variations of $q_{t'}+C_{(t'+1):t}^{\tilde{v}}$ in order to leave out the state values for which we know they are away from the argminimum.

\begin{proposition}
\label{P4}
We consider the function $q_{t'}+C^{\tilde{v}}_{(t'+1):t}$ with $t'+1 < t$. The minimum of this function is inside the interval
 $$I_{t'}^{\tilde{v}} = \mathcal{S} \setminus (S_{min} \cup S_{max})\,,$$
with
$$ \left\{
\begin{array}{l}
S_{min} = \{s_{min},\dots,\min(v_l^{-1},[v^*]^{-1})\}\,, \\
S_{max} = \{\max(v_u^{+1},[v^*]^{+1}),\dots,s_{max}\}\,,
\end{array}
\right.
$$
 where $[\cdot]$ is the nearest ``value in set $\mathcal{S}$" operator and $(\cdot)^{+1}$ (resp. $(\cdot)^{-1}$) denotes the following (resp. preceding) value in the ordered set $\mathcal{S}$. Function $q_{t'}$ is decreasing on $\{s_{min},\dots,v_l\}$ and increasing on $\{v_u,\dots,s_{max}\}$. We also know that the argminimum of $C^{\tilde{v}}_{(t'+1):t}$ is given by formula
$$v^* = \frac{6}{(t-t'-1)(2(t-t')-1)}\sum_{i = t'+1}^{t}\Big[(t-i)y_i\Big] - \tilde{v} \frac{t-t'+1}{2(t-t')-1}\,.$$
\end{proposition}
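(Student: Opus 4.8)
The plan is to treat the two assertions in turn. First I would derive the closed form for $v^*$ by minimising the quadratic $C^{\tilde v}_{(t'+1):t}$ directly. Setting $N := t-t'$ and $a_i := \frac{t-i}{t-t'}$, the identity $v + (\tilde v - v)\frac{i-t'}{t-t'} = v\,a_i + \tilde v\,(1-a_i)$ turns (\ref{costLinear}) into
$$C^{\tilde v}_{(t'+1):t}(v) = \sum_{i=t'+1}^{t}\big(y_i - v\,a_i - \tilde v\,(1-a_i)\big)^2\,,$$
a quadratic in $v$ with leading coefficient $\sum_i a_i^2$, which is strictly positive precisely when $N \ge 2$, i.e. when $t'+1 < t$; this explains that hypothesis and guarantees a unique global minimiser. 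Differentiating in $v$ and setting the derivative to zero gives the normal equation
$$v^*\sum_i a_i^2 = \sum_i a_i y_i - \tilde v \sum_i a_i(1-a_i)\,.$$
Writing $j = t-i$ so that $j$ runs over $0,\dots,N-1$ and inserting $\sum_j j = \frac{(N-1)N}{2}$ and $\sum_j j^2 = \frac{(N-1)N(2N-1)}{6}$, I obtain $\sum_i a_i^2 = \frac{(N-1)(2N-1)}{6N}$ and $\sum_i a_i(1-a_i) = \frac{N^2-1}{6N}$. Dividing through and using $N^2-1=(N-1)(N+1)$ collapses the ratio to the stated formula; this step is only careful bookkeeping.

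For the localisation I would use two monotonicity facts. By the assumed structure, $q_{t'}$ strictly decreases on $\{s_{min},\dots,v_l\}$ and increases on $\{v_u,\dots,s_{max}\}$, hence $q_{t'}(s) > q_{t'}(s^{+1})$ whenever $s \le v_l^{-1}$ and $q_{t'}(s^{-1}) < q_{t'}(s)$ whenever $s \ge v_u^{+1}$. Because $C^{\tilde v}_{(t'+1):t}$ is the strictly convex quadratic above, its value depends monotonically on the distance to $v^*$; its restriction to the ordered grid $\mathcal{S}$ is therefore strictly unimodal with discrete minimiser at the nearest state $[v^*]$, so that $C^{\tilde v}_{(t'+1):t}(s) > C^{\tilde v}_{(t'+1):t}(s^{+1})$ for every $s \le [v^*]^{-1}$ and, symmetrically, $C^{\tilde v}_{(t'+1):t}(s^{-1}) < C^{\tilde v}_{(t'+1):t}(s)$ for every $s \ge [v^*]^{+1}$. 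Adding the two inequalities, for any state $s \le \min(v_l^{-1},[v^*]^{-1})$ the sum $q_{t'}+C^{\tilde v}_{(t'+1):t}$ strictly decreases from $s$ to $s^{+1}$, so $s$ cannot be the argminimum; this discards $S_{min}$. By the mirror argument, for any $s \ge \max(v_u^{+1},[v^*]^{+1})$ the sum strictly increases from $s^{-1}$ to $s$, which discards $S_{max}$. Hence the argminimum lies in $I_{t'}^{\tilde v} = \mathcal{S}\setminus(S_{min}\cup S_{max})$.

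I expect the only genuinely delicate point to be the discrete unimodality claim for $C^{\tilde v}_{(t'+1):t}$: one must argue that the strictly convex quadratic restricted to the grid attains its minimum at the nearest grid point $[v^*]$ and is strictly monotone on each side, which is exactly where the ``nearest value'' operator $[\cdot]$ and the boundary states $\min(v_l^{-1},[v^*]^{-1})$ and $\max(v_u^{+1},[v^*]^{+1})$ must be handled with care (in particular, checking that the step out of each excluded block is still monotone, which is what forces the use of the predecessor/successor $[v^*]^{-1}$ and $[v^*]^{+1}$ rather than $[v^*]$ itself). By contrast, the derivation of the $v^*$ formula is tedious but entirely routine.
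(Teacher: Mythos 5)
Your proposal is correct and follows essentially the same route as the paper's (much terser) proof: compute $v^*$ as the argminimum of the strictly convex quadratic $C^{\tilde v}_{(t'+1):t}$, then observe that on $S_{min}$ (resp.\ $S_{max}$) both summands are non-increasing toward $s^{+1}$ (resp.\ non-decreasing from $s^{-1}$), so the minimum must lie in $I_{t'}^{\tilde v}$. You simply supply the bookkeeping the paper omits (the sums $\sum_i a_i^2$ and $\sum_i a_i(1-a_i)$, and the discrete unimodality of the quadratic around $[v^*]$), and your verification checks out.
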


\begin{proof}
$v^*$ is easy to compute as the argminimum of the quadratic function $C^{\tilde{v}}_{(t'+1):t}$. The set $S_{min}$ (resp. $S_{max}$) corresponds to the set on which we know that function $q_{t'}+C^{\tilde{v}}_{(t'+1):t}$ is decreasing (resp. increasing) with the next (resp. previous) value in $\mathcal{S}$ giving a smaller output.
\end{proof}

Notice that the closest state search $[v^*]$ is a $O(1)$ operation if $\mathcal{S}$ is made of consecutive integers or a $O(\log(n))$ operation in a general ordered list. We define the matrix $\mathcal{Q} \in \mathbb{R}^{m \times n}$ with $\mathcal{Q}_{ij} = Q_j(s_i)$. The name ``channel" comes from the fact that the matrix $\mathcal{Q}$ is restricted by a channel with fixed $v_u$ and $v_l$ states for each column of this matrix. For all $\mathcal{U}(t,v)$, this channel is updated to $I_{t'}^{\tilde{v}}$ after computing the value $v^*$ and using Proposition \ref{P4}.

\section{Variance Estimation}
\label{sec:varest}

With real data-sets the noise level $\sigma^2$ in time series is an unknown quantity that has to be estimated. In change-in-mean problems there exist many estimators for the variance of a time-series as for example the mean absolute deviation (MAD) or HALL estimators \cite{hall1990asymptotically}. They have the property to be slightly sensitive to change-points when the number of changes is small compared to data length. In slope problems we face the additional difficulty to remove the slope effect. We suggest to adapt the HALL estimator to our change-in-slope problem. In classic mean problem, we have for HALL of order $3$: 
$$\hat{\sigma}^2_{mean} = \frac{1}{n-3}\sum_{j=1}^{n-3}\Big( \sum_{k=0}^3 d_k Y_{j+k}\Big)^2\,,$$
with $d_0 = 0.1942$, $d_1 = 0.2809$, $d_2 = 0.3832$ and $d_3 = -0.8582$. We apply Hall on the successive differences (``HALL Diff"):
$$\hat{\sigma}^2_{slope} = \frac{1}{(n-4)\Delta}\sum_{j=1}^{n-4}\Big( \sum_{k=0}^3 d_k (Y_{j+k+1}-Y_{j+k})\Big)^2\,,$$
with $\Delta = d_0^2 + (d_1-d_0)^2 + (d_2-d_1)^2 + (d_3-d_2)^2 + d_3^2 = 1.527507$ a normalization coefficient.

To evaluate the quality of the HALL Diff estimator, we simulate time series of length $100$ with $2$ different signals: the first one is piecewise linear and the second one is sinusoidal, as shown in Figure \ref{fig:sigmaEstim}.

\begin{figure}[!h]
\center\includegraphics[width=0.8\textwidth]{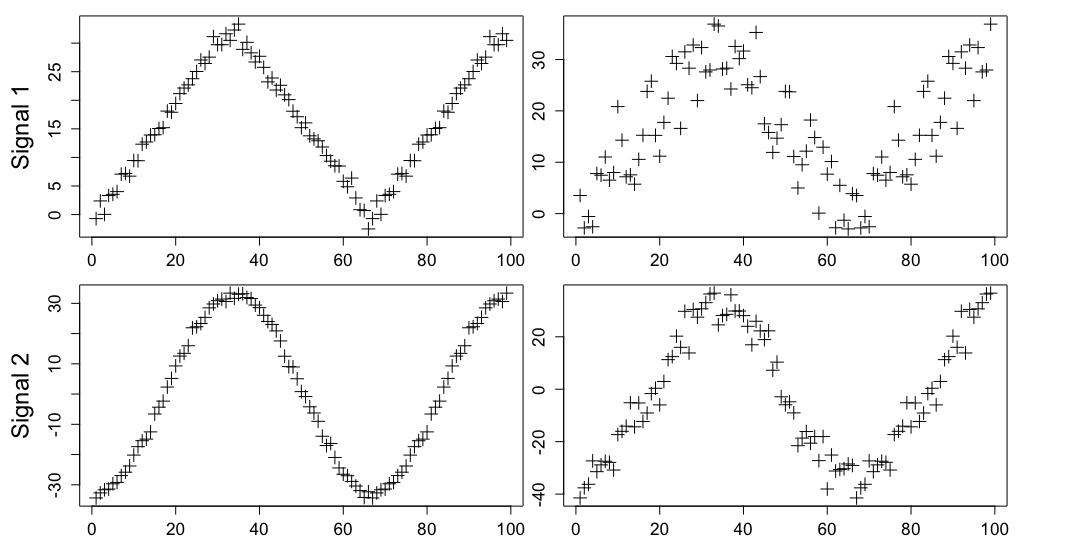} 
\caption{Examples for the linear and sinusoidal time series with noise $\sigma = 1$ (left) and $\sigma = 5$ (right).}
\label{fig:sigmaEstim}
\end{figure}

In Table \ref{tab:variance}, we present results for estimating the standard deviation $\sigma$ for $10^4$ simulated time series for each level of noise ($\sigma= 1,\dots,5$). We easily conclude that the HALL Diff method gives very accurate results, even with a non-linear (sinusoidal) model. We also have a better precision of the estimated sigma for HALL Diff than for MAD. Simulations with more change-points (10) yield similar results (see Appendix~\ref{app:variance}).

\begin{table}[!h]
\begin{tabular}{c|c|cccccc}
method&signal& $\sigma$ &1& 2 & 3 & 4 & 5 \\
\hline\hline
MAD&slope&mean($\hat\sigma$)& 1.23 & 2.11 & 3.08 & 4.05 & 5.05  \\
& &sd($\hat\sigma$)& 0.13 & 0.25 & 0.39 & 0.52 & 0.64\\
\cline{2-8}
&sinus&mean($\hat\sigma$)& 1.93 & 2.54 & 3.36 & 4.26 & 5.22  \\
&&sd($\hat\sigma$)& 0.16 & 0.26 & 0.38 & 0.51 & 0.64\\
\hline\hline
HALL&slope&mean($\hat\sigma$)&  1.84 & 2.52 & 3.37 & 4.28 & 5.23 \\
 &&sd($\hat\sigma$)& \it 0.048 & \it 0.13 & \it 0.21 & \it 0.29 &\it  0.37 \\
\cline{2-8}
&sinus&mean($\hat\sigma$)&  3.58 & 3.97 & 4.56 & 5.27 & 6.06 \\
&&sd($\hat\sigma$)& \it 0.026 & \it 0.082 & \it 0.16 & \it 0.24 & \it 0.32\\
\hline\hline
HALL Diff&slope&mean($\hat\sigma$)& \bf 1.01 & \bf 2.00 & \bf 3.00 & \bf 3.99 & \bf 4.99  \\
 &&sd($\hat\sigma$)& 0.087 & 0.18 & 0.27 & 0.36 & 0.44\\
\cline{2-8}
&sinus&mean($\hat\sigma$)& \bf 1.02 & \bf 2.00 & \bf 3.00 & \bf 3.99 & \bf 4.99 \\
&&sd($\hat\sigma$)& 0.087 & 0.18 & 0.26 & 0.36 & 0.44  \\
\hline\hline
\end{tabular}
\caption{Variance estimation with MAD, HALL and HALL Diff estimators. The closest values to the true sigma are in bold; the smallest standard deviations in italic.}
\label{tab:variance}
\end{table}

\section{Simulation Study}
\label{sec:simu}

Our simulation study is split into three parts. We first compare the mean squared error (MSE) and the Adjusted Rand Index (ARI) between the true signal and the inferred one with many different algorithms and study the impact of choosing different $\beta$ penalty values. In the next part, we compare the two proposed pruning rules and also select our main competitor, CPOP, to challenge slopeOP in terms of computational efficiency. Finally, we consider misspecified time series with an heavy-tailed noise and explore the capacity of slopeOP with a minimal angle between segments to infer a good model on a range of penalties.

\subsection{MSE Competition with Other Algorithms}

We consider the following five methods:
\begin{itemize}
    \item OP-2D : we fit the data with the PELT algorithm but the cost function on a segment is the residual sum of squares between data and a linear regression (2-dimensional fit). In that case, we don't have any continuity constraint;
    \item FPOP : The standard efficient gaussian FPOP algorithm \cite{Maidstone} used on differenced data $z_t = y_{t+1}-y_t$, $t=1,\dots,n-1$. The continuity is obtained by construction;
    \item RFPOP : Same algorithm as the previous FPOP but with robust biweight loss (with robust parameter $K = 3\sigma$) \cite{Fearnhead};
    \item CPOP : A FPOP-like algorithm for changes in slope with continuity constraint \cite{fearnhead2018detecting}. The set of states is here infinite-dimensional ($\mathbb{R}$) and not finite as for slopeOP; 
     \item slopeOP : Our finite-state change-in-slope OP algorithm.
\end{itemize}

\begin{figure}[!h]
\center
\includegraphics[width=1\textwidth]{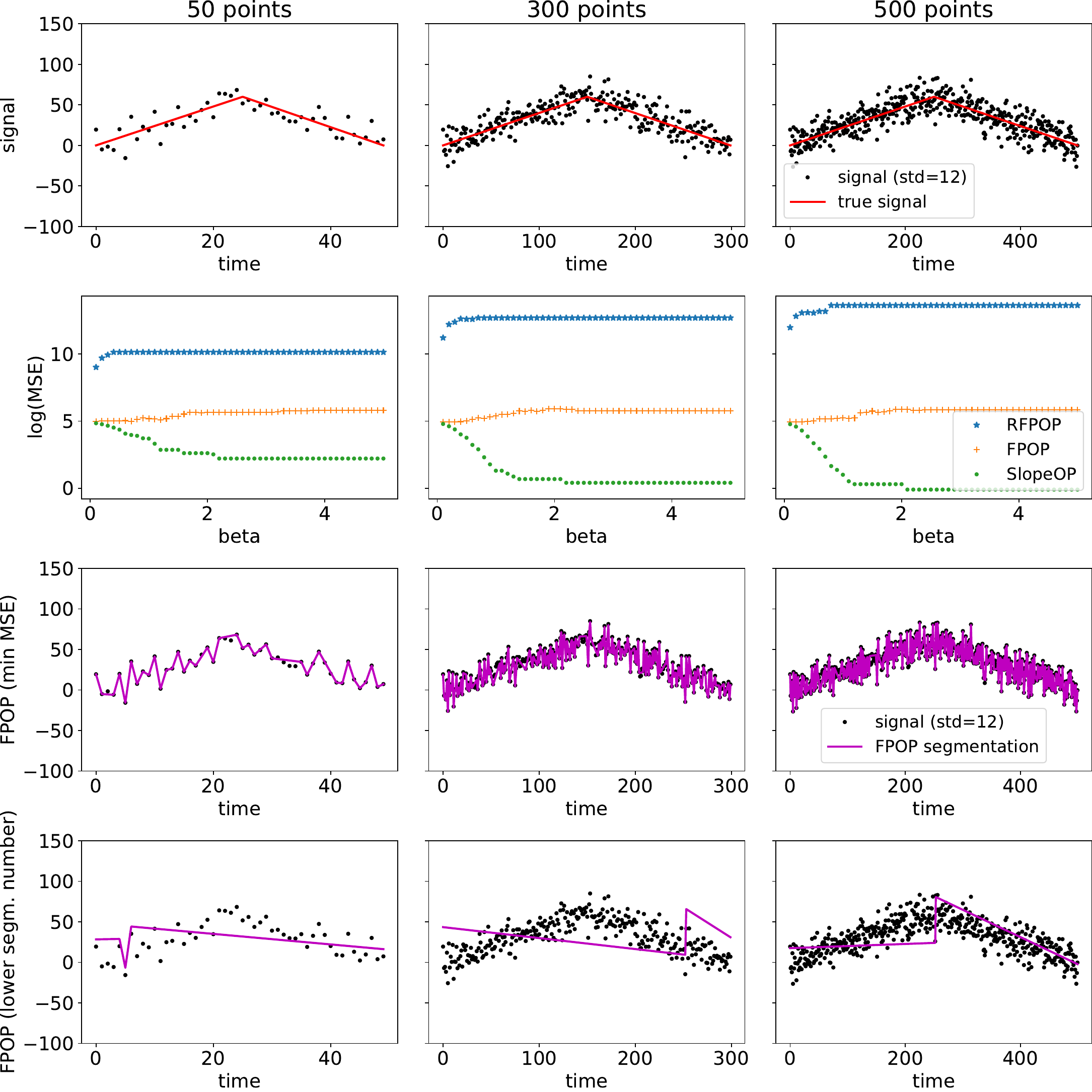} 
\caption{The first row presents 3 time-series of size $50$, $300$ and $500$ that are obtained by the black hat-shaped signal with an additive Gaussian noise with $\sigma = 12$. For a range of values for $b = \frac{\beta}{\sigma^2\log(n)}$ we plot the average log(MSE) curve for algorithms FPOP, RFPOP and slopeOP for $10$ simulated time series. The first two algorithms give performances far away from slopeOP. In the third row we show an example of an inferred signal by FPOP with a $b$ value corresponding to the minimum in log(MSE) curves. In last row we plot the segments obtained by the biggest possible $b$ value before having a unique segment with the same data. We observe that we fail to obtain only $2$ segments and that we obviously obtain an unrealistic result.}
\label{fpoprfpop}
\end{figure}

The first idea to detect change in slope consists of looking at changes in data $z_t= y_{t+1}-y_t$ with any well-known change-in-mean algorithm. We use here the efficient FPOP and robust FPOP algorithms for the inference. We highlight in Figure~\ref{fpoprfpop} the fact that with a simple hat-shaped model and a low level of noise, the FPOP algorithms give bad performances as the number of datapoint increases. Therefore, we left out theses two methods and study only the behavior of three algorithms: OP-2D, CPOP and slopeOP.

\begin{figure}[!h]
\center
\includegraphics[width=\textwidth]{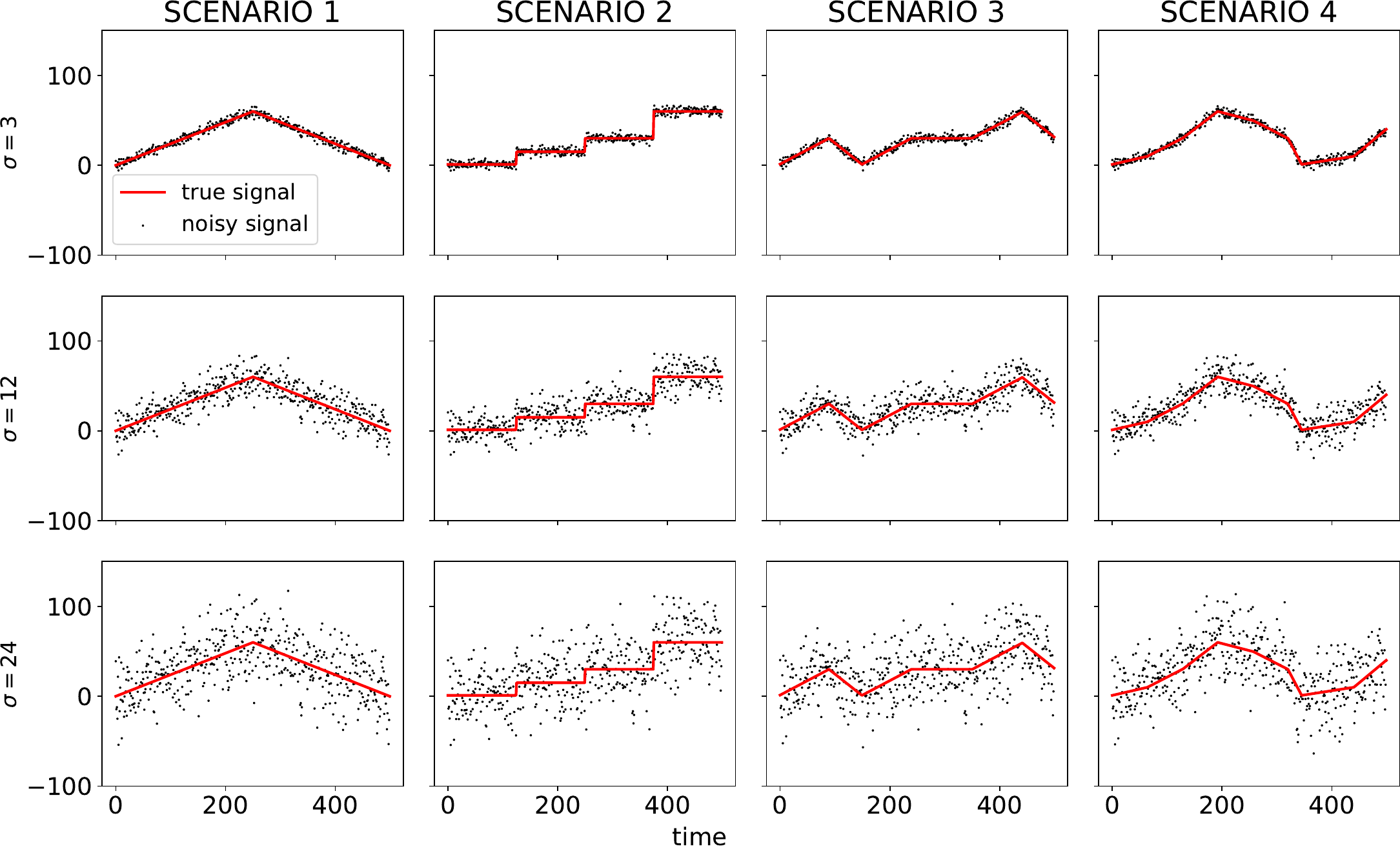} 
\caption{The plain black line is the true signal. For each scenario we simulate time series with an additive Gaussian noise of increasing standard deviation ($\sigma = 3, \sigma = 12, \sigma = 24$). Each plot shows a realisation of this procedure with $500$ data points. The number of segments for scenarios $1$ to $4$ is respectively $2$, $7$, $6$ and $8$.}
\label{4scenarios}
\end{figure}

We now consider 4 scenarios of increasing complexity with 3 levels of noise as shown in Figure~\ref{4scenarios}. The beginning and ending values of all segments are integers in $\{0,\dots,60\}$ and we choose a set of states $\mathcal{S} = \{-10,\dots,70\}$ to reduce side effects in limit points ($0$ and $60$).\\
We simulate $150$ time series of length $500$ for each of the couple (scenario, noise) and draw in Figure~\ref{scenarioResults} the graph of the MSE, the ARI and computation time with respect to $b = \frac{\beta}{\sigma^2\log(n)}$, where $\sigma$ is the chosen known level of noise. For each of the $150$ time series, the $3$ algorithms are run for $b = 0.1,0.2,\dots,4.9,5$ and each point of the curves (log(MSE), ARI or time) is the mean over the $150$ independent simulations. %For the smallest MSE in each of the $12$ MSE curves, we present the $30$ corresponding results in the boxplots. The last column shows the obtained segments for the best (smallest MSE) of the $30$ results in boxplots (for OP-2D and slopeOP).

We discuss the results for the intermediate level of noise ($\sigma = 12$) exposed in Figure~\ref{scenarioResults}, other results are exposed in Appendix~\ref{app:scenarios}. For all scenarios CPOP and slopeOP give very close and similar good results compared to OP-2D. The lack of continuity constraint explains the higher MSE value and lower ARI for OP-2D in scenarios $1$, $3$ and $4$. For all scenarios and all noise levels, the optimal penalty in our simulations with $n=500$ is close to $b = 2$ for slopeOP and CPOP as expected in asymptotic regime \cite{zheng2019consistency}. Execution time is smaller for CPOP in most scenarios as data length is here limited to $500$ (see next Subsection for more details). In Figure~\ref{bestMSEsegmentation}, we plotted the segmentation obtained by OP-2D and slopeOP at the smallest MSE value. The result highlight the need for a continuity constraint to get a better inference.

% DONE: \red{Autres métriques que les MSE : NID... = nouvelle figure?!! voir avec Guillem et/ou Arnaud https://cran.r-project.org/web/packages/aricode/aricode.pdf }

\begin{figure}[!h]
\center
\includegraphics[width=\textwidth]{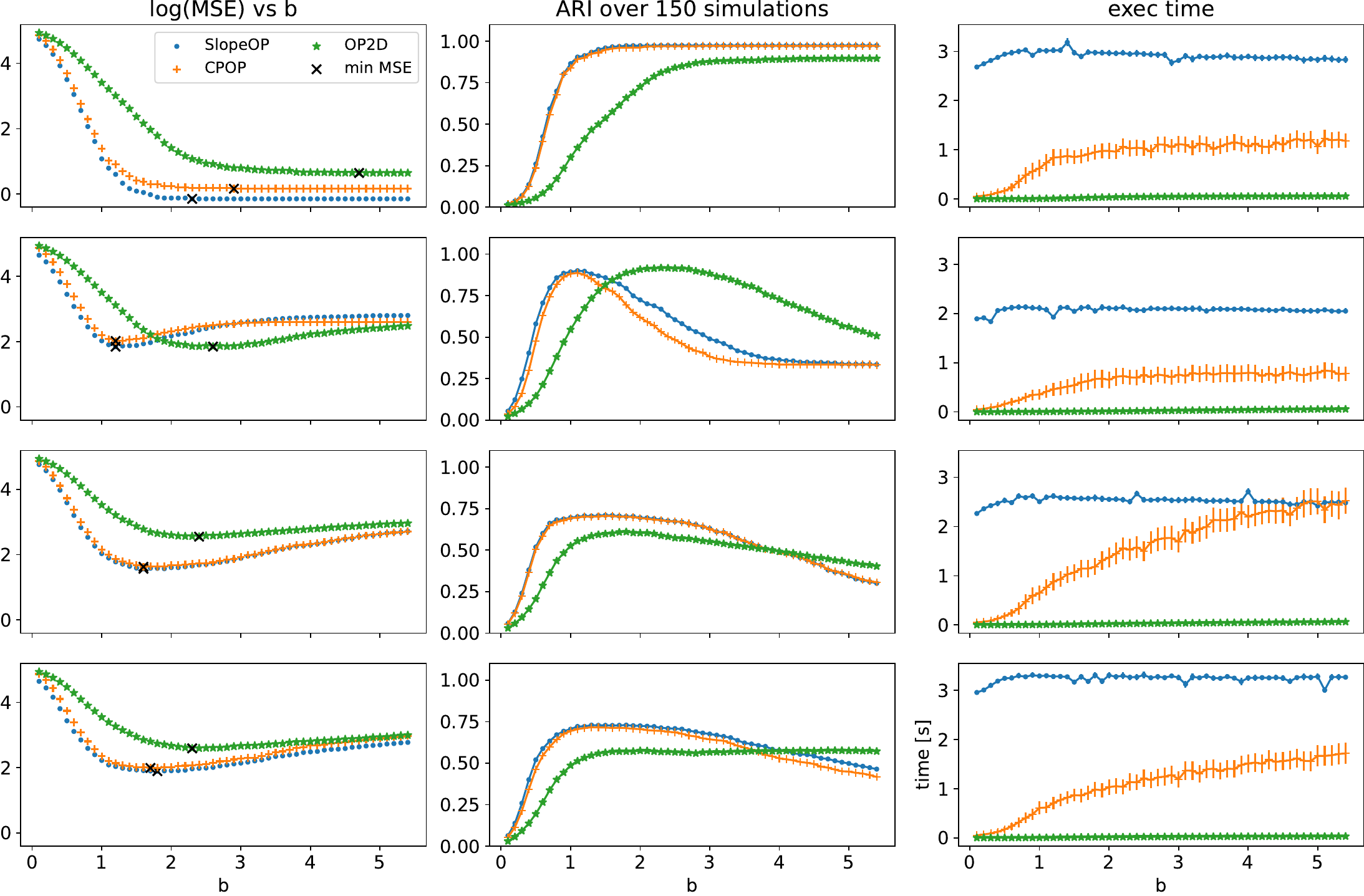} 
\caption{Results for scenarios $1$ to $4$ with noise $\sigma = 12$. Each point of the MSE curve is the mean over $150$ independent simulations.
Columns 2 shows the Adjusted Rand Index (ARI) of the clustering obtained with SlopeOP and CPOP compared with the true signal.
%For the smallest MSE in each of the $12$ MSE curves, we present the $30$ corresponding results in the boxplots.
The last column shows the execution time vs $b$.
The results of SlopeOP and CPOP are very similar. SlopeOP is slower than CPOP in most cases, but it's complexity does not depend on $b$.
%The last column shows the obtained segments for the smallest MSE of the $30$ results in boxplots for OP-2D and slopeOP. 
}
\label{scenarioResults}
\end{figure}

\begin{figure}[!h]
\center
\includegraphics[width=\textwidth]{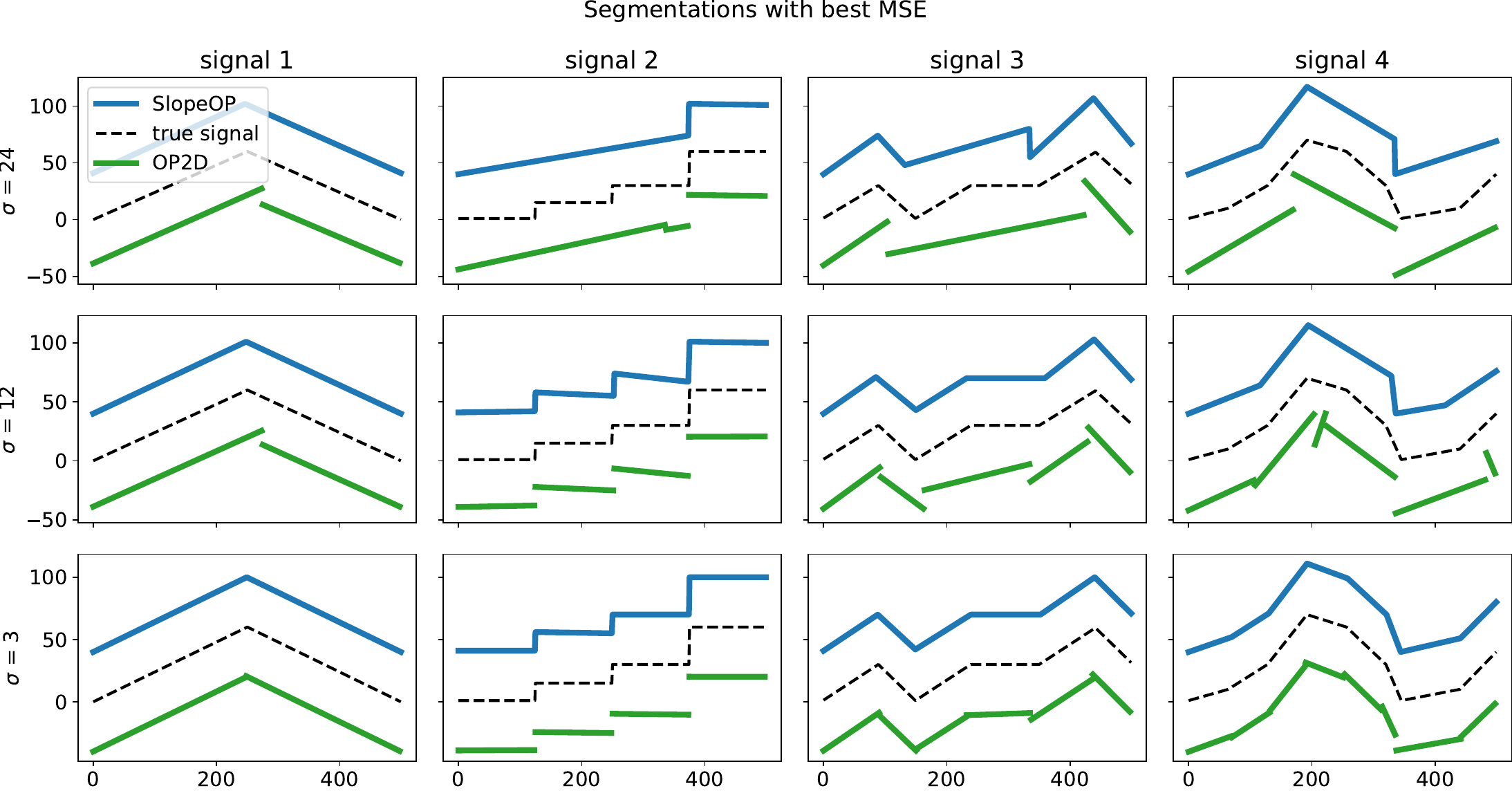} 
\caption{Segmentation obtained with values of $b$ that minimise the MSE. These values correspond to the black crosses in figure \ref{scenarioResults}.
The curve are deliberately y-shifted for better appreciation of the shape differences.}
\label{bestMSEsegmentation}
\end{figure}

\subsection{Computational Efficiency}
\label{subsec:time}

We first compare time efficiency for the two accelerating methods: channel-based rule and inequality-based pruning. We consider hat-shaped time series with a signal from $10$ to $50$ and $61$ integers states ($\mathcal{S} = \{0,\dots,60\}$) for inference. In Figure \ref{fig:timePruning}, we ran $100$ repetitions for each sigma value with data of length $500$ and two hat-shaped signals (with $1$ hat or $25$ regularly spaced hats). We easily understand that the channel rule is more efficient for all noise levels. It may be possible to get similar pruning efficiency in some very particular situations but we conclude with these simulations that the default rule should always be the channel method. Morever, this latter is simpler as it processes each element in the cost matrix quicker (compare Propositions~\ref{P3} and~\ref{P4}).

\begin{figure}[!h]
\center
\includegraphics[width=0.98\textwidth]{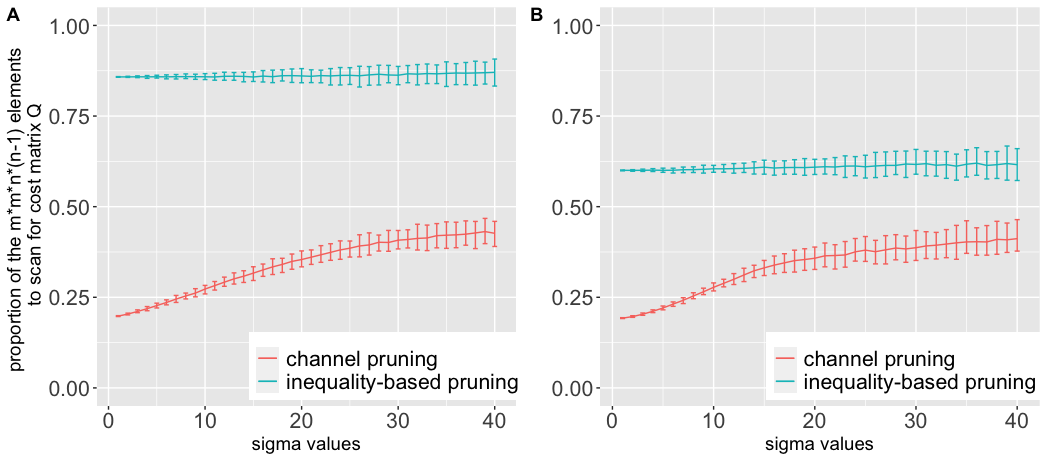} 
\caption{Without pruning the dynamic programming algorithm needs to scan $m^2n(n-1)$ elements for finding mimima at each step. We compare the proportion of elements that the algorithm scans for channel-based and inequality-based prunings with different noise levels. In panel A we use a hat-shaped signal of length $500$ with $1$ hat and a signal with $25$ regular hats in panel B. For each noise level, we simulate $100$ time series and plot the error bar using the $95\%$ empirical confidence interval. For all noise levels, the channel method is more efficient.}
\label{fig:timePruning}
\end{figure}

We now compare time efficiency between slopeOP and CPOP algorithms with two noise levels and two different hat-shaped signals over $100$ simulations for each data length regularly sampled between $n=100$ and $n=1500$ on the log scale. In Figure \ref{fig:time}, we plot the mean time for each algorithm with a $1$-hat signal and two levels of noise in log-log scale. The results confirm the quadratic complexity for slopeOP and a complexity for CPOP between $n^2$ and $n^3$ (closer to $n^3$ with a higher level of noise). In this simulation, slopeOP outperforms CPOP for time series of length greater than $550$ with $\sigma = 3$ and $1000$ with $\sigma = 24$ (for example, with $n=1500$ and $\sigma = 3$ we have a mean time of $32s$ for CPOP and $10s$ for slopeOP).  The coefficient $q$ in complexity $O(n^q)$ is equal to $2.88$ and $2.94$ for CPOP for $\sigma = 3$ and $\sigma = 24$, respectively. For slopeOP we get $1.95$ and $2.04$ with the channel pruning option. With a signal of $10$ regular hats, CPOP is faster up to $n=1500$ with power coefficient for CPOP $q_{\sigma = 3}=2.16$ \& $q_{\sigma = 24}=2.47$ and for slopeOP $q_{\sigma = 3}=1.96$ \& $q_{\sigma = 24}=2.03$. The corresponding plot is presented in Appendix~\ref{app:time}.

\begin{figure}[!h]
\center
\includegraphics[width=1\textwidth]{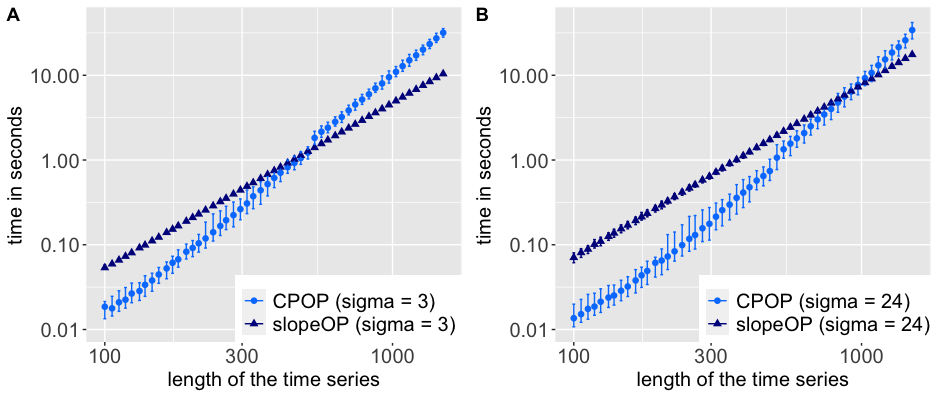} 
\caption{Computational time of CPOP versus slopeOP for one-hat-shaped data with noise level $\sigma =3$ and $\sigma =24$ in log-log scale. The coefficient $q$ in complexity $O(n^q)$ is equal to $2.88$ and $2.94$ for CPOP for $\sigma = 3$ and $\sigma = 24$, respectively. For slopeOP we get $1.95$ and $2.04$ with the channel pruning option. We plot the error bar using the $95\%$ empirical confidence interval.}
\label{fig:time}
\end{figure}

\subsection{A Minimum Angle Between Consecutive Segments}
\label{subsec:minangle}

We generate time series in scenario $1$ of the hat-shaped signal with an heavy-tailed noise: a Student distribution with a degree of freedom equal to $3$ with $\sigma = 24$. Simple computations give an angle of about $153^o$ between the two segments. We choose to fix a minimal angle between segments to $130^o$. We explore the value of the MSE returned by the standard slopeOP algorithm compared with the same algorithm with smoothing option for a range of penalty values. We also return the number of inferred segments. Each data-point is the mean over $100$ simulations.

Results presented in Figure~\ref{fig:smoothing} show that the MSE as well as the number of segments is lower with the min-angle option, whatever the penalty value. With a misspecified model or in presence of outliers, the minimum angle option (when we expect no small angle) improves the result of slopeOP. The penalty value has a reduced impact on the algorithm, which can be important for applications when the time series to analyse does not have a Gaussian noise structure.

\begin{figure}[!h]
\center
\includegraphics[width=1\textwidth]{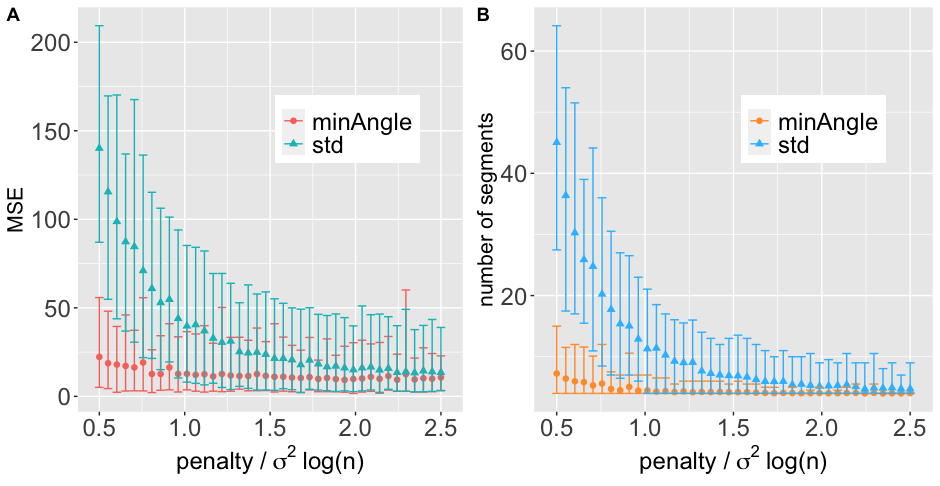} 
\caption{For a range of penalties from $\beta = 0.5 \sigma^2 \log(n)$ to $\beta = 2.5 \sigma^2 \log(n)$, $100$ one-hat-shaped time series have been simulated with $n=500$ for each penalty value and we ran slopeOP with a smoothing option ({\it minAngle = 130}) and without constraint  ({\it std}). On the left panel A, we compare the MSE between the inferred signal and the true one for the two methods; on the right panel B, we compare the number of inferred segments for the two methods. Results highlight the stability with respect to the penalty of the algorithm with minimal angle option.}
\label{fig:smoothing} 
\end{figure}

% ###################
% ANTIBIOGRAM
% ###################
\section{Application to Antibiogram Image Analysis}
\label{sec:app}
%\include{antibiogram}

%Antibiogram overview
Antibiotic Susceptibility Testing (AST) is a microbiology test used to guide antibiotic prescription in bacterial diseases by determining the susceptibility of bacteria to different antibiotics.

In disk diffusion AST, or antibiogram, cellulose disks impregnated with specific antibiotics are placed on the surface of Petri-dishes previously inoculated with a microorganism. Agar plates are incubated so that bacteria can grow everywhere, except around the cellulose disks that contain antibiotics to which the bacteria are susceptible. Then the diameter of the zone of non-growth (inhibition) surrounding each antibiotic disk is measured and compared to known minimum diameters to determine the susceptibility (the comparison is done with a sensibility of one millimeter)\cite{Jorgensen_disk_diffusion}.
Diameters can be read by eye with a ruler, but several automatic reading systems exist which process pictures of incubated plates \cite{Osiris2001,sirscan2013}.

The inhibition zone boundary is usually clear and easy to measure, but it can sometimes be hazy and fuzzy. In most cases, zone edges should be read at the point of complete inhibition\cite{EUCAST_protocol}, but determining this point can be challenging.

Several image processing procedures have been proposed for measuring inhibition diameters\cite{Hejblum2396,Gavoille1993,Costa2015}, but none of them focuses specifically on the issue of fuzzy borders. We propose the use of slopeOP to measure inhibition diameters in these cases. To our knowledge, a MSE-minimizing optimal segmentation algorithm has never been used to measure AST's inhibition dimeters.

We designed a processing pipeline that uses slopeOP to read inhibition diameters from the picture of an AST and tested it on one hundred inhibition zones presenting fuzzy borders. Fifty inhibition zones were taken from standard Mueller-Hinton growth medium antibiograms, the other fifty form blood agar antibiograms, which are darker and less contrasted.
\begin{figure}
    \centering
    \includegraphics[width=1\textwidth]{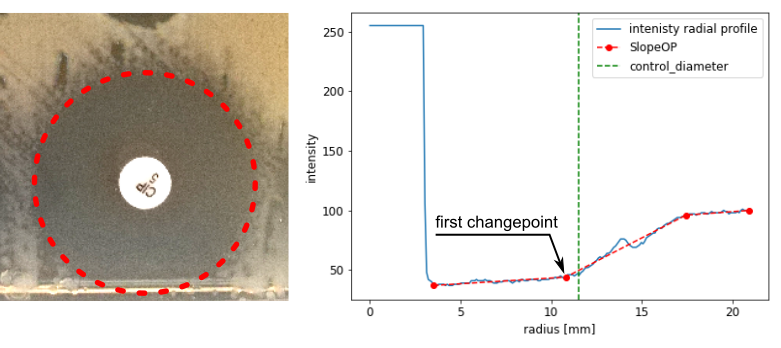}
    \caption{Picture of a part of an antibiogram showing an antibiotic disk and its inhibition zone (left). On the right a plot of the intensity radial profile and the segmentation with slopeOP with isotonic constraint.}
    \label{fig:AST_radial_profile}
\end{figure}
After normalizing the intensity of the AST picture, a sub-region of each inhibition zone was selected and centered on the the antibiotic disk. Then the radial intensity profile $I(r)$ was extracted:
we recorded $I$, the average intensity value of the 10 most intense pixels (the scale of the images is $\simeq10\frac{pixel}{mm}$) lying at fixed distance $r$ from the sub-region center (see figure \ref{fig:AST_radial_profile}). This method is a simple way to detect bacteria even if the inhibition zone in not a perfect disk. The average profile size in the data-set is 230 point.

slopeOP was used to segment the radial profile and determine the inhibition zone radius.
For the segmentation with slopeOP we take the signal starting at 3.5mm, i.e. just after the plateau corresponding to the pellet disk (from this point, the ideal signal is supposed to be isotonic).

slopeOP is used with
$\mathcal{S} = \{I(r)_{min}, \ldots, I(r)_{max}\}$ and a penalty $\beta=255$. We chose a high penalty in order to decrease the chance of detecting any changepoint before the bacterial halo (because of small imperfections in the sample or picture). As a comparison, we report in the appendix the results obtained with $\beta=2\sigma^2 \mathrm{log}(n)$ (cfr \ref{sup:ast}).

We tested both with and without the isotonic constraint (as the observed signal is supposed to be isotonic).

After the segmentation, we measure the inhibition radius $r_{inib}$ as the distance corresponding to the first change-point. The inhibition diameter is obtained as $d_{inhib}=2 \times r_{inhib}$.

As a comparison, we measured the diameters with the method suggested by Gavoille et al.\cite{Gavoille1993} which uses a student t-test and considers both the intensity and the texture of the pixels around the antibiotic disk.

The so measured diameters were compared with manual (by eye) measurements, by calculating the absolute difference.

\begin{table*}\small\centering
\begin{tabular}{c l rrr}
\toprule
&&\multicolumn{3}{c}{\bf diameter diff quantiles}\\
\bf AST type & \bf algorithm & 25\% & 50\% & 75\%\\
\midrule
M-H	    &CPOP	    &1.60	    &3.60	    &6.00\\
	    &SlopeOP	&0.29	    &0.91       &2.92\\
	    &SlopeOPi	&\bf0.24	&\bf0.69	&\bf1.71\\
	    &Ttest	    &0.50	    &1.50	    &3.25\\
\\
blood	&deltaCPOP	    &1.80   	&3.20	    &5.55\\
	    &deltaSlopeOP	&0.72	    &1.72	    &3.00\\
	    &deltaSlopeOPi	&\bf0.47	&\bf1.28	&\bf2.38\\
	    &deltaTtest	    &2.00   	&9.50   	&16.75\\
\bottomrule
\end{tabular}
\caption{Agreement between auto and manual reading. We observe the distribution quantiles of $\Delta{d} = |d_{a} - d_{c}|$ the absolute difference between the automatically measured diameter $d_a$ and the control value $d_c$. The smallest values are embolded.
% DONE\red{AJOUTER boxplot des diff? + résultat avec CPOP}
}
\label{tab:AST_results}
\end{table*}

\begin{figure}[h]
\begin{center}
\includegraphics[width=6cm]{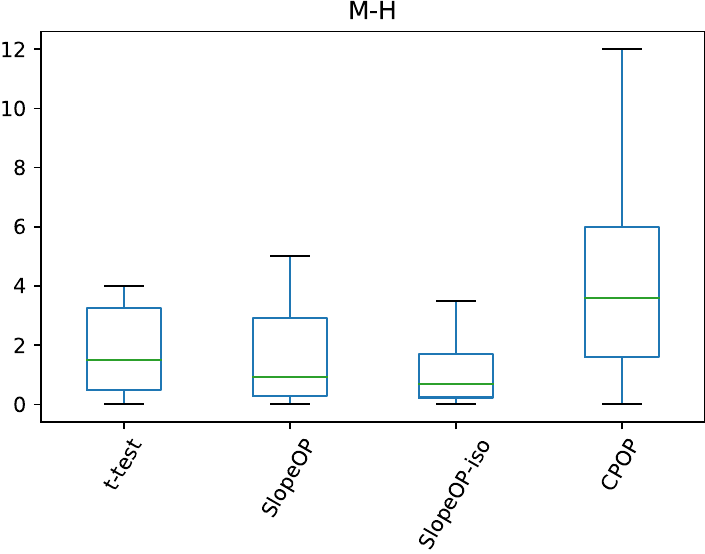}
\includegraphics[width=6cm]{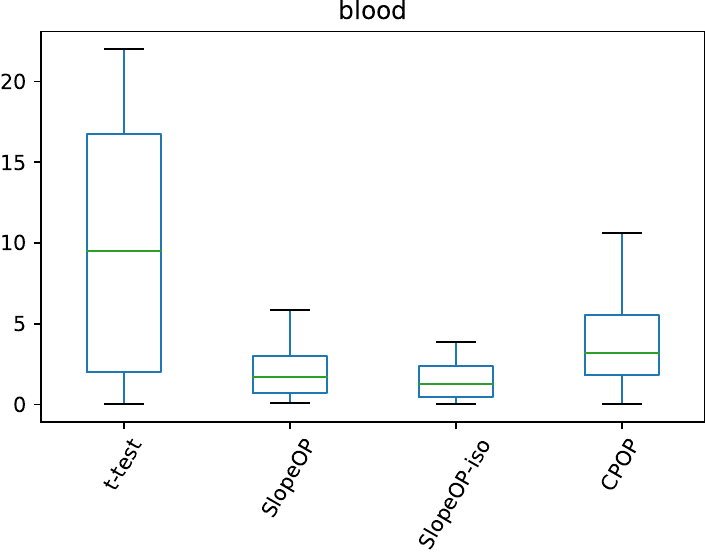}
\end{center}
\caption{Distribution of absolute diameter differences across the tested algorithms. The isotonic constraint visibly improves the accuracy of the measurement.}
\label{fig:AST_results}
\end{figure}

The results (Table \ref{tab:AST_results} and Figure \ref{fig:AST_results}) show a good consistency with the manual measurements. In the case of Mueller-Hinton antibiorgams, half of the diameter differences between our procedure and control are below the test sensibility (1mm) and 75\% are smaller than 1.5mm, which is comparable with the acknowledged inter-operator variability due to the subjectivity of the measurement ($\pm$ 1mm) \cite{Hombach3864}. The results on the blood agar ASTs are slightly worse because of the lower contrast in the images and a consequent decreased signal-noise ratio. In both cases the isotonic constraint yields improved results. 

\begin{table*}\small\centering
\begin{tabular}{lr rrr rr}
\toprule
&&\multicolumn{3}{l}{\bf diameter diff quantiles}&\bf speed&\bf  Changepoints\\
&&25\%&50\%&75\%&\\
\bf agar&\bf state density&&&&\\
\midrule
M-H     &1  &0.24&0.69&1.71&1.00&4.25\\
        &1/2&0.19&0.64&1.84&2.97&4.33\\
        &1/4&0.21&0.77&1.55&7.74&4.24\\
        &1/8&0.38&0.83&2.57&17.92&4.25\\
blood   &1&0.47&1.28&2.38&1.00&3.92\\
        &1/2&0.43&1.04&1.84&2.76&3.90\\
        &1/4&0.70&1.52&2.34&6.69&3.94\\
        &1/8&1.07&2.05&3.83&14.53&3.96\\
\bottomrule
\end{tabular}
\caption{Performance of slopeOP-isotonic at various states density. The calculation speed is reported as the inverse of the average execution time (normalized to density=1).
Changepoints=average number of changepoints found.
Mean execution time for slopeOP-iso1 is 5ms @ 2,3 GHz Intel Core i5.}

\label{tab:AST_results_states}
\end{table*}

Successively we tested the performance of slopeOP with isotonic constraint when reducing the density of states by 2,4 and 8. We defined the states space as
$\mathcal{S} = \{ I_{min}, I_{min}+s, I_{min}+2s, \ldots,I_{max}\}$ and repeated the measurements on the data-set at each value of $s=2,4,8$. The results are reported in Table \ref{tab:AST_results_states} and show a neat improvement in the execution speed by reducing the states density. Although the precision of the measurement decreases with decreasing states density, the measured diameter are still in reasonable accord with the control values.

% ###################

\section{Conclusion}
\label{sec:conclusion}
Multiple change-point detection for change in slope with continuity constraint is a challenging problem in time-series analysis. The dynamic programming approach developed in CPOP \cite{fearnhead2018detecting} exactly optimize the penalized likelihood but suffers from a quadratic-to-cubic time complexity in data length. We proposed a novel approach with quadratic complexity restricting the set of possible slope and intercept values to a finite set. In addition, this new problem allows us to introduce constraints on successive segments. State values for the spatial discretization has to be chosen for each time-series and leads to two difficulties: how close are we to the continuous solution of CPOP? How to make a appropriate choice of state values? These questions remains widely open and could be a subject for further developments. A natural extension of this work would consider a dynamic programming algorithm following at each iteration a two-continuous parameter function (instead of one as in CPOP). In this parametrization the challenging question of pruning rules is yet open.

\section*{Acknowledgement}
We thank Guillem Rigaill (LaMME Evry, IPS2 Paris-Saclay) for joint mentoring of our intern (co-author Nicolas Deschamps de Boishebert), for the careful reading of this work and the many useful feedbacks.

\appendix

\section{Proof of Proposition \ref{prop:updateRule}}
\label{app:update}
%\begin{proof}
The update rule is obtained by direct computations (we write ``vectors $\tau$ and $s$" to mean the constraints in~(\ref{optimSlope})): 
$$Q_t(v) = \min_{\substack{\text{vectors } \tau \text{ and } s \\ \tau_{k+1} = t\,,\, s_{k+1} = v}}\sum_{i=0}^k \big\{ \mathcal{C}(y_{(\tau_i+1):\tau_{i+1}}, s_i, s_{i+1}) + \beta\big\} - \beta$$
$$ = \min_{\substack{\text{vectors } \tau \text{ and } s}}\sum_{i=0}^{k-1} \big\{ \mathcal{C}(y_{(\tau_i+1):\tau_{i+1}}, s_i, s_{i+1}) + \beta\big\} - \beta + \mathcal{C}(y_{(\tau_{k}+1):t}, s_{k}, v) + \beta$$
$$ =\!\!\!\!\!\! \min_{\substack{0 \le t' < t\\ s_{min} \le u \le s_{max}}}\!\!\!\min_{\substack{\text{vectors } \tau \text{ and } s \\ \tau_{k} = t'\,,\, s_{k} = u}}\sum_{i=0}^{k-1} \big\{ \mathcal{C}(y_{(\tau_i+1):\tau_{i+1}}, s_i, s_{i+1}) + \beta\big\} - \beta + \mathcal{C}(y_{(t'+1):t}, u, v) + \beta$$
$$ = \min_{\substack{0 \le t' < t\\ s_{min} \le u \le s_{max}}}\left(Q_{t'}(u) + \mathcal{C}(y_{(t'+1):t}, u, v) + \beta\right)\,.$$
%\end{proof}

\section{Proof of Proposition \ref{P3}}
\label{app:pelt}

%\begin{proof}
In order to prune we need to force the inequality
$$\mathcal{C}(y_{{(t'+1)}:T},u,v) \ge \mathcal{C}(y_{{(t'+1)}:t},u,v) + \mathcal{C}(y_{{(t+1)}:T},v,v)\,,$$
to be true for all $T \ge t+1$ and fixed $t',t,u,v$. Using the cost definition (\ref{costLinear}), the inequality is equivalent to
$$(v-u) \Bigg(\sum_{i=t'+1}^{t}y_i \frac{(i-t')(T-t)}{(t-t')(T-t')} + \sum_{i=t+1}^{T-1}y_i \frac{T-i}{T-t'} \Bigg)$$ 
$$\quad\quad\quad\quad\quad\quad\quad\ge (v-u)(T-t) \Bigg(\frac{u + 2v}{6} + \frac{v-u}{12(t-t')(T-t')}\Bigg)$$
and then
\begin{equation}
\label{pruningProof}
(v-u)g_t(T) \ge  (v-u)\Bigg((T-t') \Big(\frac{u + 2v}{6}\Big) + \frac{v-u}{12(t-t')} -\frac{S_t^{t'}}{t-t'} \Bigg)
\end{equation}
with
$$ g_t(T) = \left\{
\begin{array}{ll}
\displaystyle \sum_{i=t+1}^{T-1}y_i \frac{T-i}{T-t} \quad &\hbox{ for }  T=t+2,\dots,n\,, \\
0 &\hbox{ for }  T=t+1 \,.
\end{array}
\right.
$$
As the right hand side of inequality (\ref{pruningProof}) is linear in $T$ we choose the following strategy: find coefficients $(\alpha_t^-,\alpha_t^+,\gamma^-_t,\gamma^+_t)$ of an upper and lower linear approximation such that
$$\alpha_t^+ T + \gamma^+_t \le g_t(T) \le \alpha_t^- T + \gamma^-_t\,, \quad T= t+1,\dots,n\,.$$
We then introduce the linear-in-$T$ functions $f^-$ and $f^+$ defined in the Proposition. To prove that the inequality $f^+(T) \ge 0$ holds for all $T=t+1$ to $n$ we only need to have $f^+(t+1) \ge 0$ and $f^+(n) \ge 0$ due to the linearity of $f^+$. With the same argument for $f^-$ the result is proven.
%\end{proof}

\section{Variance estimation: complement with $10$ segments}
\label{app:variance}
With smaller segments as presented in Figure \ref{fig:sigmaEstim2}, the estimator "HALL diff" remains the more efficient but overestimates the variance in case of a sinusoidal signal, in particular for lower noise levels. Nevertheless, this estimation remains less biased than the two others.

\begin{figure}[!h]
\center\includegraphics[width=0.9\textwidth]{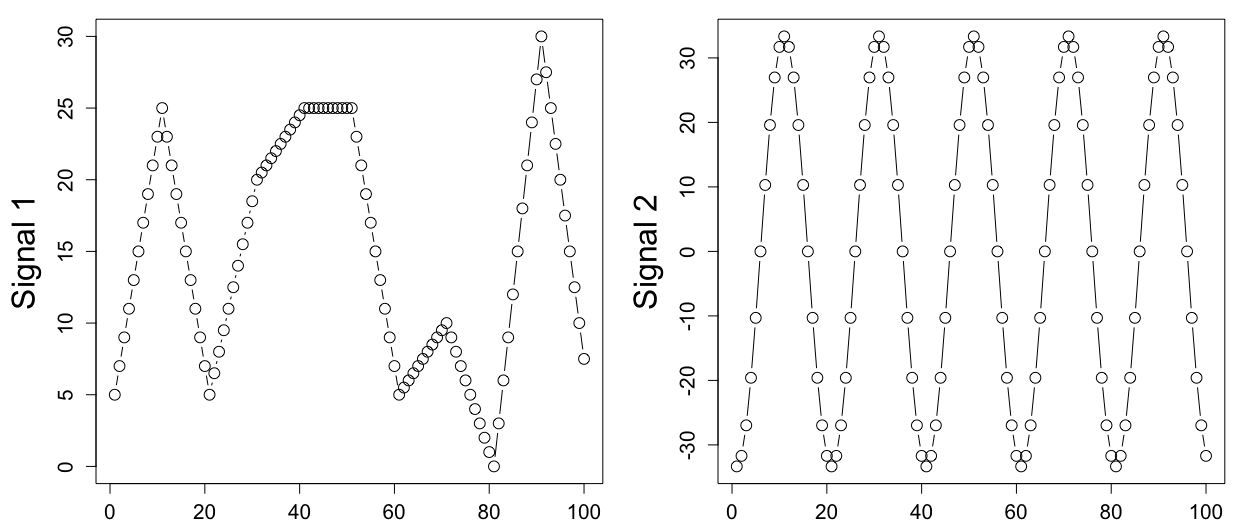} 
\caption{Linear and sinusoidal signals of length $100$ with $10$ segments.}
\label{fig:sigmaEstim2}
\end{figure}
\begin{table}[!h]
\begin{tabular}{c|c|cccccc}
method&signal& $\sigma$ &1& 2 & 3 & 4 & 5 \\
\hline\hline
MAD&slope&mean($\hat\sigma$)& 1.71 & 2.37 & 3.26 & 4.19 & 5.163   \\
&(signal 1) &sd($\hat\sigma$)& 0.15 & 0.26 & 0.39 & 0.52 & 0.65 \\
\cline{2-8}
&sinus&mean($\hat\sigma$)& 7.60 & 7.30 & 7.21 & 7.37 & 7.75    \\
&(signal 2)&sd($\hat\sigma$)& 0.26 & 0.35 & 0.45 & 0.56 & 0.67
  \\
\hline\hline
HALL&slope&mean($\hat\sigma$)& 2.71 & 3.21 & 3.92 & 4.72 & 5.59  \\
 &(signal 1)&sd($\hat\sigma$)& \it 0.056 & \it 0.13 & \it 0.2 & \it 0.28 & \it 0.37  \\
\cline{2-8}
&sinus&mean($\hat\sigma$)& 11.2 & 11.3 & 11.5 & 11.8 & 12.2  \\
&(signal 2)&sd($\hat\sigma$)& 0.048 & \it 0.097 & \it 0.15 & \it 0.21 & \it 0.27\\
\hline\hline
HALL Diff&slope&mean($\hat\sigma$)& \bf 1.18 & \bf 2.09 & \bf 3.06 & \bf 4.04 & \bf 5.02  \\
 &(signal 1)&sd($\hat\sigma$) & 0.085 & 0.17 & 0.26 & 0.35 & 0.44 \\
\cline{2-8}
&sinus&mean($\hat\sigma$) & \bf 2.43 & \bf 2.98 & \bf 3.72 & \bf 4.56 & \bf 5.46  \\
&(signal 2)&sd($\hat\sigma$)& \it 0.039 & 0.12 & 0.22 & 0.31 & 0.40  \\
\hline\hline
\end{tabular}
\caption{Variance estimation with MAD, HALL and HALL Diff estimators based on the two signals presented in Figure \ref{fig:sigmaEstim2}. The closest values to the true sigma are in bold; the smallest standard deviations in italic. We simulated $10^4$ time-series for each experiment. We notice that all estimators struggle to return an unbiased estimation with the sinusoidal signal, in particular with the MAD and HALL estimators for which the estimation seems no more sigma dependent.}
\label{tab:variance2}
\end{table}

\newpage

\section{Simulation Results for the $4$ Scenarios}
\label{app:scenarios}

\begin{figure}[!h]
\center
\includegraphics[width=\textwidth]{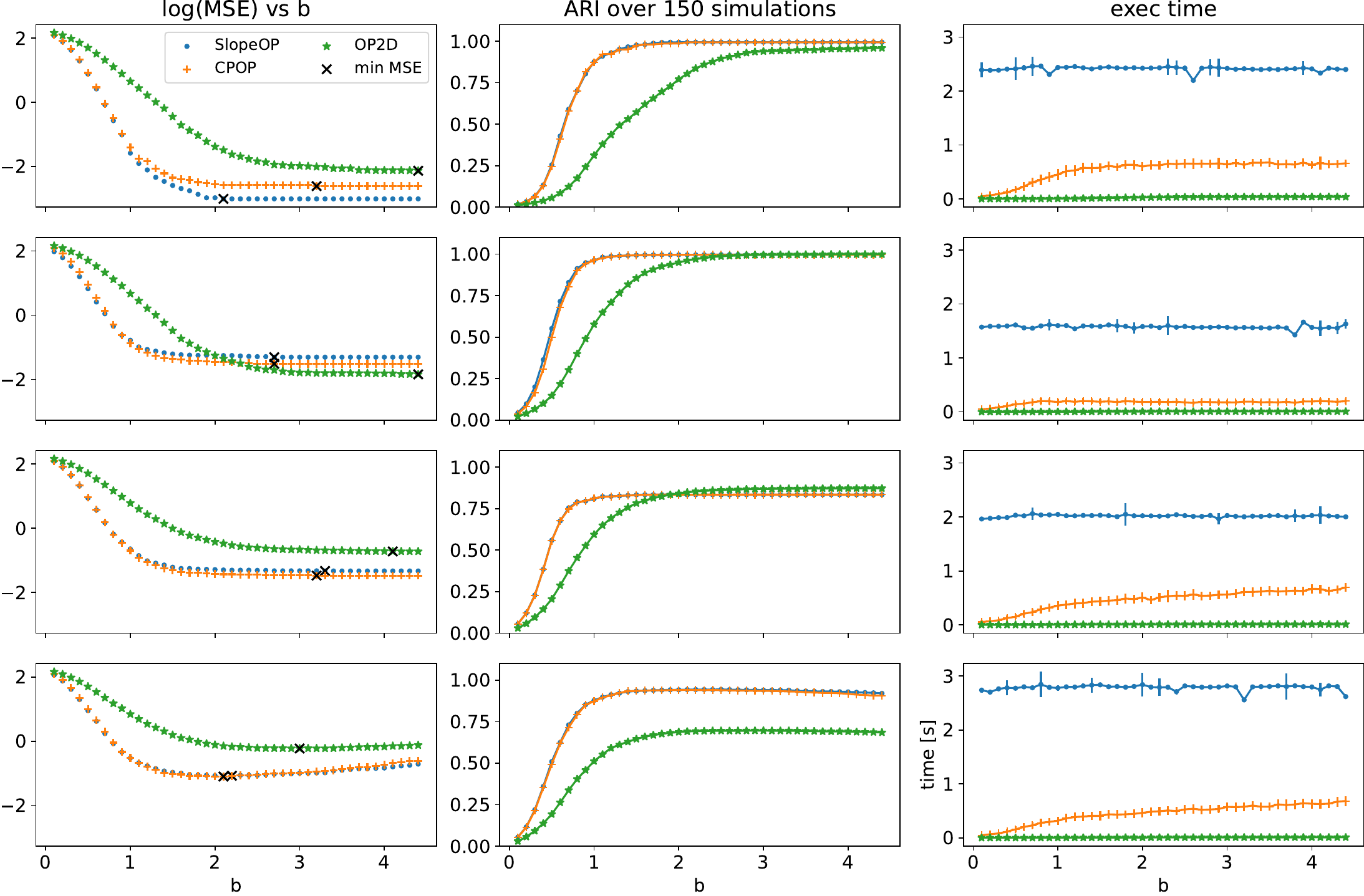} 
\caption{Results for scenarios $1$ to $4$ with noise $\sigma = 3$}
\label{scenarioResults3}
\end{figure}

\begin{figure}[!h]
\center
\includegraphics[width=\textwidth]{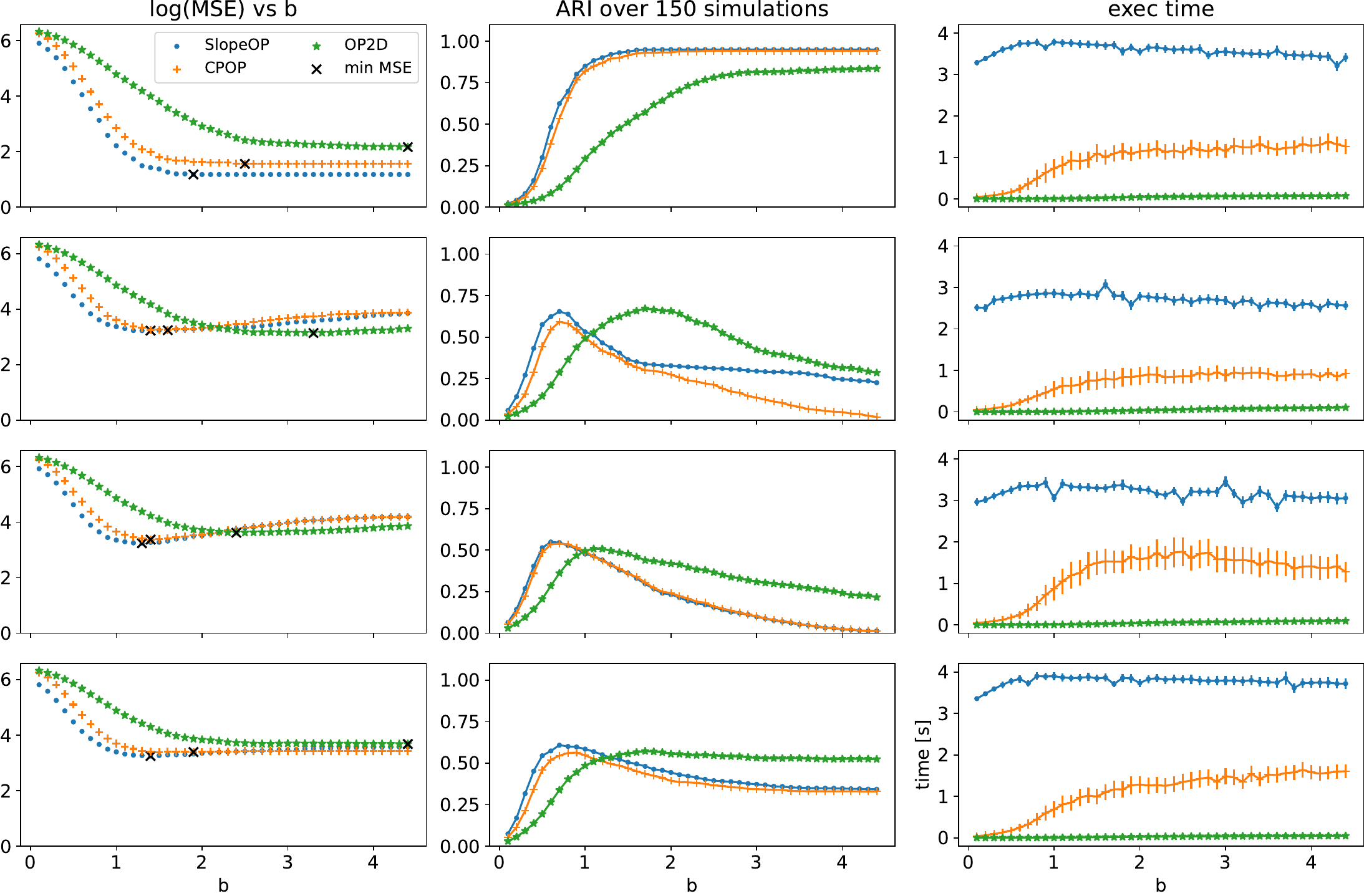} 
\caption{Results for scenarios $1$ to $4$ with noise $\sigma = 24$}
\label{scenarioResults24}
\end{figure}

\newpage

\section{Time complexity with $10$-hat-shaped signals}
\label{app:time}

\begin{figure}[!h]
\center
\includegraphics[width=1\textwidth]{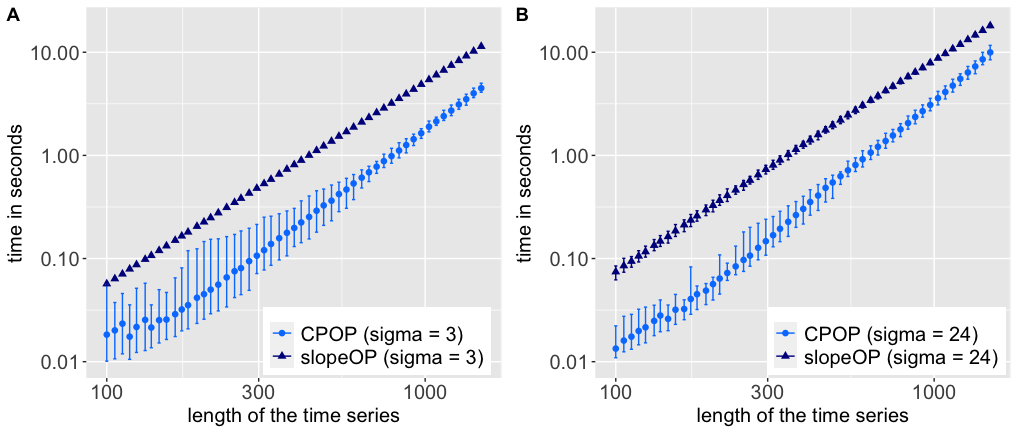} 
\caption{Computational time of CPOP versus slopeOP for $10$-hat-shaped data with noise level $\sigma =3$ and $\sigma =24$ in log-log scale. The coefficient $q$ in complexity $O(n^q)$ is equal to $2.16$ and $2.47$ for CPOP for $\sigma = 3$ and $\sigma = 24$, respectively. For slopeOP we get $1.96$ and $2.03$ with the channel pruning option.}
\label{fig:time10hats}
\end{figure}

\newpage

\section{AST application - complementary results}
\label{sup:ast}
\begin{table*}\small\centering
\begin{tabular}{llrrr}
\toprule
	&	&\multicolumn{3}{l}{diameter diff. quantiles}\\
	&	&25\%	&50\%	&75\%\\
AST type	&algorithm	&	&	&\\
\midrule
blood	&deltaCPOP	    &8.00	    &12.50	    &18.00\\
	    &deltaSlopeOP	&8.35   	&12.10	    &15.28\\
	    &deltaSlopeOPi	&\bf1.63	&\bf3.44	&\bf7.23\\
	    &deltaTtest	    &2.00   	&9.50	    &16.75\\
\\
M-H	    &deltaCPOP	    &9.00	    &16.00	    &18.00\\
	    &deltaSlopeOP	&6.46	    &12.05	    &17.01\\
	    &deltaSlopeOPi	&1.89	    &8.49	    &15.56\\
	    &deltaTtest	    &\bf0.50	&\bf1.50    &\bf3.25\\
\bottomrule
\end{tabular}
\caption{Agreement of auto and manual reading.}
\label{tab:AST_results_low_beta}
\end{table*}

\begin{figure}[h]
\begin{center}
\includegraphics[width=5cm]{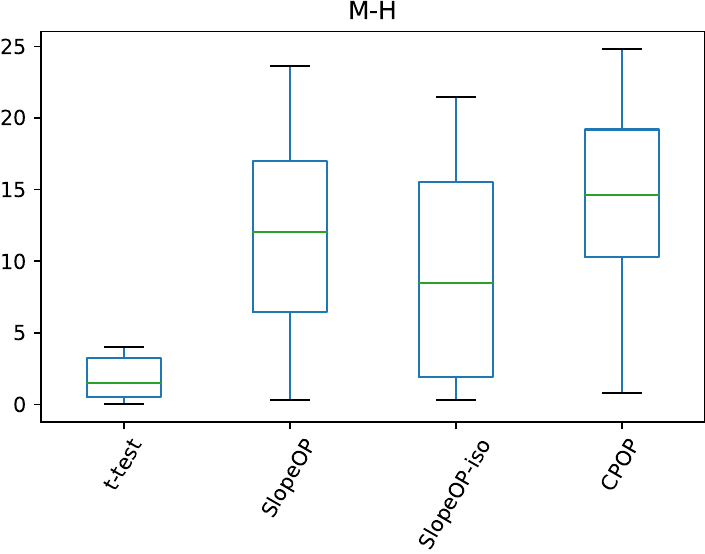}
\includegraphics[width=5cm]{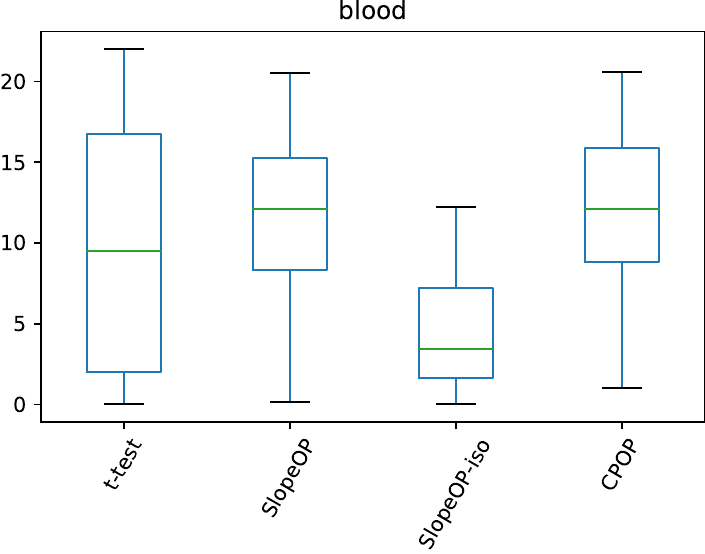}
\end{center}
\caption{Distribution of absolute diameter differences across the tested algorithms. For M-H antibiograms, Gavoille's t-test method shows the best results, whereas in the case of blood enriched culture medium, SlopeOP with isotonic contraint outperforms the others. Both SlopeOP unconstrained and CPOP give poor measurements, underlining the fact that the isotonic constraint improves the results}
\label{fig:AST_results_low_beta}
\end{figure}

We repeated the automatic experiment of Section \ref{sec:app} with a variable penalty calculated as $\beta=2\sigma^2 \mathrm{log}(n)$ where the variance $\sigma^2$ is estimated with the HallDiff estimator presented in this paper (cfr. \ref{sec:varest}).
The average penalty value obtained is $4$ which results in the detection of more changepoints in the signal (compared to Section \ref{sec:app}. In this case, the assumption that the first detected changepoint is close the bacterial boundary is not appropriate and results in much worse measurement for all tested segmentation algorithms.
Nevertheless, the results (Table \ref{tab:AST_results_low_beta} and Figure \ref{fig:AST_results_low_beta}) show an neat improvement in the measurement accuracy when using the isotonic constraint. In the case of Mueller-Hinton antibiotics, even if the isotonic constraint ameliorates the result of the segmentation, none of the proposed segmentation algorithms achieves good measurements compared to Gavoille's t-test method.
In the case of blood enriched culture medium instead, where the t-test method gives less accurate results, the isotonic constraint of SlopeOP shows the best measurement accuracy.

\begin{table*}\small\centering
\begin{tabular}{llrrrrr}
\toprule
	&	&\multicolumn{3}{l}{abs. diameter difference}	&speed	&av. changepoints number\\
	&	&25\%	&50\%	&\multicolumn{3}{l}{75\%}\\
agar	&states density	&	&	&	&	&\\
\midrule
blood	&1  	&1.63	&3.44	&7.23	&1.00	&7.24\\
	    &1/2	&0.93	&4.43	&7.87	&2.87	&5.60\\
	    &14	    &2.03	&4.20	&6.46	&7.41	&8.76\\
	    &1/8	&1.76	&3.04	&6.97	&17.02	&8.14\\
mh	    &1	    &1.89	&8.49	&15.56	&0.43	&9.76\\
	    &1/2	&2.54	&8.65	&15.19	&1.00	&8.29\\
	    &14	    &1.12	&5.00	&11.15	&2.80	&10.98\\
	    &1/8	&1.94	&3.69	&10.62	&6.96	&10.82\\
\bottomrule
\end{tabular}
\caption{Performance of slopeOP-isotonic at various states density.
The calculation speed is reported as the inverse of the average execution time (normalized to density=1).
Changepoints=average number of changepoints found.
NOTE: Mean execution time for slopeOP-iso1 is 5ms @ 2,3 GHz Intel Core i5.}
\label{tab:AST_results_states_1_low_beta}
\end{table*}

\bibliographystyle{abbrv}
\bibliography{Bibliography}

\end{document}